\setlist[enumerate]{leftmargin=1.5cm,rightmargin=0.5cm,noitemsep, topsep=2pt}
\newcommand{\zerodisplayskips}{%
  \setlength{\abovedisplayskip}{4pt}%
  \setlength{\belowdisplayskip}{4pt}%
  \setlength{\abovedisplayshortskip}{4pt}%
  \setlength{\belowdisplayshortskip}{4pt}}
\appto{\normalsize}{\zerodisplayskips}
\appto{\small}{\zerodisplayskips}
\appto{\footnotesize}{\zerodisplayskips}
\definecolor{clemson-orange}{RGB}{234,106,32}
\definecolor{chicago-maroon}{RGB}{128,0,0}
\definecolor{northwestern-purple}{RGB}{82,0,99}
\definecolor{cornell-red}{RGB}{179,27,27}
\definecolor{sauder-green}{RGB}{171,180,0}
\definecolor{gray}{RGB}{192,192,192}
\definecolor{lawngreen}{RGB}{0,250,154}
\definecolor{pink}{RGB}{255,0,128}
\def\BState{\State\hskip-\ALG@thistlm}
\newcommand{\bb}{\mathbb}
\newcommand{\R}{\bb R}
\DeclareMathOperator\supp{supp}
\DeclareMathOperator\dc{dc}
\DeclareMathOperator{\argmax}{arg\,max}
\theoremstyle{definition}
\newtheorem{theorem}{Theorem}
\newtheorem{lemma}{Lemma}
\newtheorem{corollary}{Corollary}
\newtheorem{proposition}{Proposition}
\newtheorem{definition}{Definition}
\newtheorem{example}{Example}
\newtheorem{step}{Step}
\theoremstyle{definition}
\newtheorem{claim}{Claim}
\patchcmd{\@addmarginpar}{\ifodd\c@page}{\ifodd\c@page\@tempcnta\m@ne}{}{}
\newcommand{\mar}[1]{}
\crefname{assumption}{Assumption}{Assumptions}
\crefname{lemma}{Lemma}{Lemmas}
\crefname{theorem}{Theorem}{Theorems}
\crefname{corollary}{Corollary}{Corollaries}
\crefname{proposition}{Proposition}{Propositions}
\crefname{claim}{Claim}{Claims}
\crefname{procedure}{Procedure}{Procedures}
\crefname{algorithm}{Algorithm}{Algorithms}
\crefname{figure}{Figure}{Figures}
\crefname{remark}{Remark}{Remarks}
\crefname{section}{Section}{Sections}
\crefname{procedure}{Procedure}{Procedures}
\crefname{example}{Example}{Examples}
\crefname{definition}{Definition}{Definitions}
\crefname{table}{Table}{Tables}
\crefname{equation}{}{}
\crefname{enumi}{}{}
\crefname{conjecture}{Conjecture}{Conjectures}
\crefname{step}{Step}{Steps}
\crefname{appendix}{Appendix}{Appendices}
\crefname{footnote}{Footnote}{Footnotes}
\title{``Near'' weighted utilitarian characterizations of Pareto optima\thanks{We thank the Co-editor, Asher Wolinsky, as well as four anonymous referees for comments that led to major improvements. We are also grateful to Florian Brandl, Timothy Y Chan, Alexander Engau, Atsushi Kajii, Yuichiro Kamada, Michihiro Kandori, and Eitetsu Ken for their helpful comments and conversations, as well as to seminar audiences at Arizona State, Carlo Alberto, Carlos III, Carnegie Mellon, Harvard-MIT, KAIST, Northwestern, Stanford, University of British Columbia, UC-Davis,  and the International Conference on Game Theory at Stony Brook.  We are especially grateful to Ludvig Sinander and Gregorio Curello for a question that led to this project. We acknowledge research assistance from Yutaro Akita, Nanami Aoi, Xuandong Chen, William Grimme, Jiangze Han, Yusuke Iwase, Masanori Kobayashi, Kevin Li, Leo Nonaka, Ryo Shirakawa, Shoya Tsuruta, Ayano Yago, and Yutong Zhang.  Yeon-Koo Che is supported by National Science Foundation Grant SES-1851821. Fuhito Kojima is supported by the JSPS KAKENHI Grant-In-Aid 21H04979. Christopher Thomas Ryan is supported by the Natural Sciences and Engineering Research Council of Canada Discovery Grant RGPIN-2020-06488 and the UBC Sauder Exploratory Grants Program.  This work was supported by the Ministry of Education of the Republic of Korea and the National Research Foundation of Korea  (NRF-2020S1A5A2A03043516).
}}
\author{
Yeon-Koo Che%
\thanks{Department of Economics, Columbia University} \qquad
Jinwoo Kim%
\thanks{Department of Economics and SIER, Seoul National University} \qquad
Fuhito Kojima%
\thanks{Department of Economics and Market Design Center, University of Tokyo}  \\
Christopher Thomas Ryan%
\thanks{Operations and Logistics Division, UBC Sauder School of Business, University of British Columbia} 
}
\affil{}
\begin{document}

\maketitle

\begin{abstract}	
 We characterize Pareto optimality via  ``near'' weighted utilitarian welfare maximization.  One characterization sequentially maximizes utilitarian welfare functions using a finite sequence of nonnegative  and eventually positive welfare weights. The other maximizes a  utilitarian welfare function with a certain class of positive hyperreal weights. The social welfare ordering represented by these ``near'' weighted utilitarian welfare criteria are characterized by  the standard axioms for weighted utilitarianism under a suitable weakening of the continuity axiom.
\\

\noindent\emph{Keywords:}  Pareto optima, weighted utilitarian welfare maximization, sequential utilitarian welfare maximization, simple hyperreal weights, weak continuity\\

\noindent\emph{JEL Numbers:}  C60, D60, D50.
\end{abstract}

\section{Introduction}\label{s:introduction}

Pareto optimality  is a central concept in economics for its normative appeal. 
Also central is weighted utilitarian welfare maximization;  e.g., \citet{harsanyi1955cardinal} famously defended it as a social welfare function based on several normative axioms.
Moreover, weighted utilitarianism is widely invoked in practice, including applied research and policy debates.  Given the prominent roles played by these two concepts, attempts have been made to establish a connection between Pareto optima and weighted utilitarianism---or more precisely, a characterization of Pareto optima via  weighted utilitarian welfare maximization. Yet, such a characterization has so far been elusive.

It is well known  that, given a closed and convex utility possibility set, which we assume throughout,  every Pareto optimal utility vector maximizes some \emph{nonnegatively} weighted sum of utilities of agents (see Proposition 3.45 in \cite{bewley2009general}). But the converse is false:  not every such maximizer is Pareto optimal. To see this, suppose a society consists of two agents, 1 and 2, and the utility possibility set is given by $U$ in \Cref{fig:illustration-of-sequential-converse}.
\begin{figure}[htb]
	\centering
	\includegraphics[scale=.9]{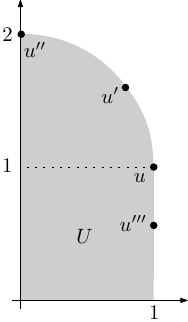}
	\caption{Weighted utilitarian welfare maximization need not yield a Pareto optimum.}
		\label{fig:illustration-of-sequential-converse}
\end{figure}
All points on the ``outer'' boundary, including the vertical segment,  maximize suitably weighted sums of agents' utilities within $U$, but not all of them are Pareto optimal. In particular, the points on the vertical segment strictly below $u$, such as $u'''$, all maximize the utility sum with weights $\phi=(1,0)$---i.e., only 1's utility.  Yet, none of these points is Pareto optimal.  The reason is that the welfare of the agent receiving zero weight is \emph{not} counted. 
 
By contrast, if weights are restricted to be (strictly)  positive for all agents, weighted utilitarian welfare maximization does always yield a Pareto optimum (Proposition 3.23 of \cite{bewley2009general}).  But  the converse is false: not every Pareto optimal outcome can be obtained in this way.  In \Cref{fig:illustration-of-sequential-converse}, $u'$ is Pareto optimal and obtained by weighted utilitarian welfare maximization with positive weights, but  $u$ and $u''$, which are also Pareto optimal, cannot be obtained.

While positive welfare weights do not yield points like $u$ in  \Cref{fig:illustration-of-sequential-converse}, one may conjecture that they may ``in the limit''; for instance, $u$ is a limit of welfare-maximizing utility vectors with positive weights $(1, 1/n)$, as $n\to \infty.$ Indeed, \cite{arrow1953abb} show that  every Pareto optimal vector is a limit of a sequence of utility vectors that maximize some positively weighted sum of utilities---a result known as  the  ABB theorem.%
%
%
\footnote{This theorem has spawned a series of extensions  to spaces more general than Euclidean space.  See \cite{daniilidis2000arrow} for a survey of ABB theorems.}
%
%
Unfortunately, this too does not lead to a characterization when there are more than two agents:%
%
%
%
\footnote{When there are two agents, the limit $u\in U$ of any sequence $\{u^k\}$ of utilities $u^k\in U$ maximizing a positively weighted sum of utilities is Pareto optimal, where $U$ is the utility possibility set, assumed to be closed and convex.  To see it, let $\{\phi^k\}$ be the sequence of positive weights, normalized to be in the simplex, such that $u^k \in \arg\max_{(u'_1,u'_2)\in U} \sum_{i=1}^2\phi^k_i u'_i$,  and let $\phi$ denote its limit (say of a convergent subsequence).  Clearly, $u \in   \arg\max_{(u'_1,u'_2)\in U} \sum_{i=1}^2\phi_i u'_i$.  If $\phi_1$ and $\phi_2$ are both positive, then $u$ is Pareto optimal, so assume  $\phi_1=1$ and $\phi_2=0$ without loss.  Suppose for contradiction $u$  is not Pareto optimal.  Then, there must exist $v  \in U$ such that $v_1=u_1$ and $v_2 >u_2$, where the equality holds since $u \in    \arg\max_{(u'_1,u'_2)\in U} \sum_{i=1}^2\phi_i u'_i  = \arg\max_{(u'_1,u'_2)\in U} u'_1$.  Since $u^k$'s are all Pareto optimal, we have $u_1^k \le v_1=u_1 $ and $u^k_2\ge v_2 >u_2$ for all $k$, so $u^k$ never converges to $u$, a contradiction.}
%
%
again its converse is false---namely, a limit point of such a sequence may not be Pareto optimal. To see this, suppose there are three agents, 1, 2, and 3,  with possible utility vectors depicted in  \cref{fig:tilted-cone}.  
   \begin{figure}
   	\centering
   	\includegraphics[scale=.7]{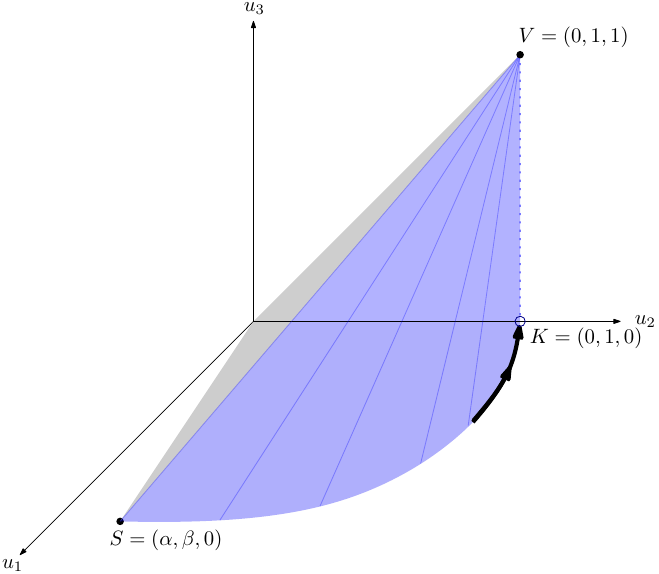}
   	\caption{ The ``tilted cone'' adapted from \cite{arrow1953abb} and \cite{bitran1979structure}. The set is the convex hull of the portion of the unit disk centered at the origin in the $u_1$-$u_2$ plane from point $K$ to point $S$ (where $\alpha^2 + \beta^2 = 1$ with $\alpha \in (0,1)$) and the apex point $V = (0,1,1)$.  The blue surface, including all of its boundaries except for the dotted line, is the set of Pareto optimal utility vectors.}
   	\label{fig:tilted-cone}
   \end{figure}
   The point $K$ is a limit of the sequence of points  maximizing a positively weighted sum of utilities (see the arrow) but is Pareto dominated, say, by the point $V$.
   \begin{figure}
    \centering
    \includegraphics[scale=.4]{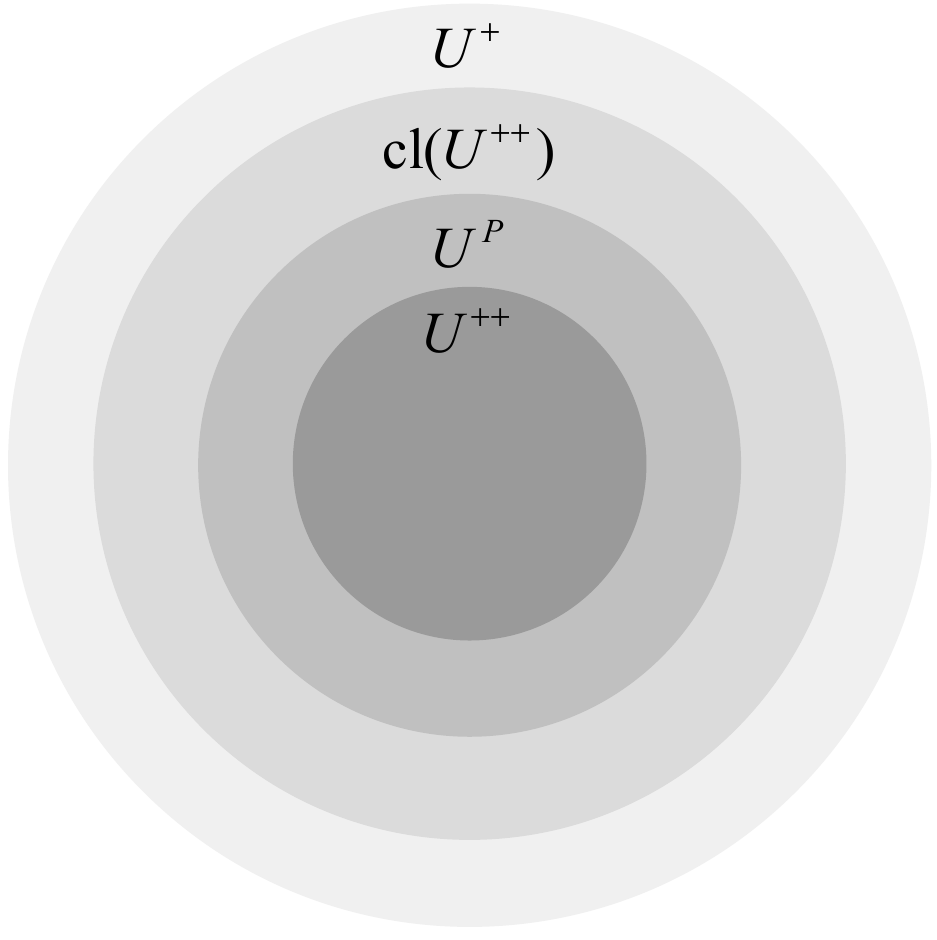}
    \caption{Alternative notions of utilitarian welfare maximization in relationship with Pareto optimality. 
    The containment $U^{++} \subset U^P \subset U^+$ follows from Propositions 3.23 and 3.45 in \cite{bewley2009general}. The containment $U^P \subset \text{cl}(U^{++})$ is from \cite{arrow1953abb}. The containment $\text{cl}(U^{++}) \subset U^+$ is straightforward. }
    \label{fig:venn-diagram}
  \end{figure}  
The relationship between Pareto optima and the alternative notions of weighted utilitarianism is depicted in   \cref{fig:venn-diagram}, where  $U^P$ is the set of  Pareto optimal utility vectors  while  $U^+$ and $U^{++}$ are the sets of utility vectors that maximize nonnegatively weighted and (strict) positively weighted utilitarian welfare, respectively, with $\text{cl}(U^{++})
$ being the closure of $U^{++}$.

 This paper provides exact characterizations of Pareto optima by close variants of weighted utilitarian welfare maximization. To ease language, we will refer to weighted utilitarian welfare maximization simply as \emph{utilitarianism}.\footnote{In particular, note that we drop the qualifier ``weighted'' in our usage of the concept of utilitarianism while keeping in mind that utilitarianism is always used in the weighted sense. Indeed, we have no occasion to discuss unweighted utilitarianism. It is only for emphasis or to provide further clarification that  we use the qualifier ``weighted'' in connection to  utilitarianism.}
 
 We show that a utility vector $u$ is Pareto optimal if and only if there exists a finite sequence of nonnegative and ``eventually positive'' welfare weights such that in each round $t$, $u$  maximizes the round-$t$ weighted sum of utilities out of those surviving from round $t-1$.   Here, ``eventually positive'' means that the support of the weight vector strictly grows over the rounds with the weight vector in the final round having full support.
 
To illustrate why our SUWM successfully characterizes Pareto optima, let us   revisit why neither the ``nonnegative utilitarianism'' captured by $U^{+}$  nor the ``positive utilitarianism'' captured by $U^{++}$ in \Cref{fig:venn-diagram}  works.   Nonnegative utilitarianism can include Pareto suboptimal outcomes because some individual's utility may not ``count'' at all.  
Positive utilitarianism avoids this problem  by requiring that every individual's utility carry  positive weights.  However,  as can be seen from \Cref{fig:illustration-of-sequential-converse}, it excludes   Pareto optimal outcomes that can be achieved only by assigning some individuals ``infinitely smaller'' weights than others. SUWM resolves this seeming conflict by assigning positive weights to individuals so that ``every agent's welfare counts'' but in different rounds:    Individuals with strictly positive weights only in later rounds can be regarded as carrying infinitely smaller weights than those with   positive weights in earlier rounds. 

The preceding observation gives rise to our second characterization of Pareto optimality, via one-shot maximization of utilitarian welfare with \emph{hyperreal} weights.  Hyperreal numbers  include not only standard real numbers but also infinitesimal ``numbers.'' The space of hyperreal numbers is very large, which may limit the usefulness of the characterization. By contrast, our characterization places an added discipline and structure on such social welfare functions.  The resulting criterion, called {\it simple hyperreal utilitarian welfare maximization} (SHUWM), requires the hyperreal weights to be not only strictly positive but also represented by a finite sequence of nonnegative and eventually positive real weights.   

Both of these characterizations of Pareto optimality capture the essential feature of standard weighted utilitarian welfare maximization. First, SUWM and SHUWM reduce to utilitarianism in many situations in which the former involves one-round maximization and the latter involves no infinitesimal weights.  Second, the welfare functions used in these characterizations are inherently \emph{linear} (based on weighted sums of agent utilities), albeit with SUWM having several rounds of linear optimizations and SHUWM involving hyperreal weights. Third, a consequence of this linearity is that the utilities of individuals are aggregated by weights that do not depend on the particular utility profile under consideration, a property we refer to as having ``constant weights.'' This is in contrast to other social welfare functions, such as Rawlsian and leximin whose weighting of an agent's utility depends on her relative position in  a given utility vector. 

The sense in which our characterizations constitute ``near'' utilitarianism is further clarified by the social welfare orderings that  underpin  our characterizations.  
\cite{d2002social} show that for social welfare orderings to be represented by a utilitarian welfare function, they must not only satisfy the \textsf{Pareto Principle}---namely, they must preserve Pareto domination order---but they must also satisfy two additional axioms: 
 \textsf{Invariance} and \textsf{Continuity}.  \textsf{Invariance} requires the orderings to be robust to translation and/or scaling of the utility profiles of individuals.  \textsf{Continuity} requires the orderings to be robust to perturbations of utility profiles. \textsf{Continuity} effectively forces the welfare weights of  agents to be in the same order of magnitude, thus making it impossible for the weight of an agent to be infinitesimally smaller than that of another agent.  Since the latter feature is crucial for characterizing Pareto optima, \textsf{Continuity} must be relaxed.
 
 Indeed, we show that the welfare orderings associated with SUWM and SHUWM can be obtained by the same set of axioms under a suitable weakening of \textsf{Continuity}---more precisely, by the \textsf{Pareto Principle}, \textsf{Invariance}, and \textsf{Weak Continuity}.  The last axiom weakens \textsf{Continuity} by requiring welfare orderings to be robust to perturbations of utilities  of {\it some, but not necessarily all}, individuals,  which is in line with our characterization of Pareto optima that allows  some individuals to be assigned infinitely larger weights than others. That our welfare notions preserve a version of continuity, albeit weakened, is a nontrivial marker of the sense in which SUWM and SHUWM closely resemble utilitarianism.  In particular, the same marker is not shared with other reasonable characterizations. For instance, as we show in a subsequent section, an (unrestricted) hyperreal-weighted utilitarian welfare function does not satisfy \textsf{Weak Continuity}.

Our characterizations of Pareto optimality fulfill a long-standing intellectual pursuit of providing a weighted utilitarian foundation for Pareto optimality. In addition, our characterizations of Pareto optimality serve other useful purposes. 

First, the SUWM characterization could provide a tractable method for computing Pareto optimal allocations, which may be useful in the market design context. In fact, SUWM can be viewed as a generalization of the serial dictatorship mechanism in which each agent acts sequentially according to serial order to maximize her utility.   Serial dictatorship is used widely for Pareto optimally allocating indivisible resources when monetary transfers cannot be used.  For instance, serial dictatorship with a randomized serial order---known as random serial dictatorship---is used for assigning public school seats, public and campus housing, and human organs.  One could imagine that SUWM can serve a  similar practical purpose, but in a much more general setting that goes beyond a one-to-one assignment. In each round, one can let a group of agents negotiate over feasible allocations at that round, as will be made precise in \Cref{s:discussion}.  Indeed, a procedure like this is used in the assignment of campus housing.\footnote{\label{fn:cohort-SD} For example, the campus housing assignment at Columbia university uses a cohort-based serial dictatorship, in which a group of students chooses a suite collectively in each round of the serial dictatorship procedure; presumably, the students then negotiate among themselves to allocate rooms within the assigned suite.}  Alternatively, a central clearinghouse may compute an optimal choice for the group in each round.\footnote{In both scenarios, we are implicitly assuming complete information.  In case agents' preferences are unobserved, the designer must rely on their preference reports, in which case agents' incentives become an important aspect of the market design.  While this issue is beyond the scope of the current paper, it can be addressed in some specific settings such as cohort-based serial dictatorship mentioned in \cref{fn:cohort-SD}, where the standard strategy-proofness property would extend to a group of students as long as they know their preferences.}

Second, the SUWM characterization could serve as a useful analytical tool for analyzing the behavior of Pareto optima as a set.  For instance, one may study the comparative statics of Pareto optima---i.e., how they change as the primitives change---utilizing monotone comparative statics methods developed for optimization (e.g., \cite{topkis:98} and \cite{milgrom/shannon:94}). The ``round-wise'' linear structure of SUWM admits a convenient aggregation property that is crucial for such an analysis. Indeed, \cite{CKK2019} use this property to develop a theory of monotone comparative statics of Pareto optima:  they show that  when agents' utility functions shift in a way that leads to higher {\it individual} choices of decisions (e.g., \cite{milgrom/shannon:94}), the Pareto optima shift to a higher set of actions in a suitable sense.\footnote{In particular,  properties such as supermodularity and increasing differences, which are important for the monotone comparative statics analysis, are preserved under this  aggregation. The same proof would not have been possible with nonlinear welfare functions.}

The remainder of the paper is organized as follows. \cref{s:main-result} describes our setting and establishes a few preliminaries used in our main results. \cref{pareto-optimality-section} states our characterization of Pareto optimality. Here, we discuss the tools used to prove the result.  \cref{social-preference-section} establishes the axiomatization of SUWM and SHUWM. \cref{s:discussion} looks at other reasonable characterizations of Pareto optimality that fail at least one of the ``near'' utilitarian axioms set out in the previous section. \cref{s:conclusion} concludes with some suggestions for future work. The  appendix provides proofs of our main characterization and axiomatization results. A  supplementary appendix contains statements and proofs of additional results.

\section{Setting and preliminaries}\label{s:main-result}

In this section, we introduce our basic setting and introduce some elementary concepts needed for stating our main results. 

Let $I=\{1,2,\ldots,n\}$ denote a finite set of agents and the \emph{utility possibility set} $U \subset \R^n $
be the set of possible utility vectors the agents may attain. We assume that $U$ is closed and convex. If $U$  stems from  an underlying choice space $X$ via utility functions $(u_i)_{i \in I}: X \to \R^n$, then we let  
\begin{equation} \label{eq:X-U}
	U =  \{   u \in \R^n \mid  u \le u(x) \text{ for some } x \in X \}.%
 \footnote{To be precise,  the utility possibility set is  often defined as $\{  u \in \R^n \mid   u  = (u_i (x))_{i\in I} \mbox{ for some } x \in X  \}$, which differs from \cref{eq:X-U}.  However, the two sets share the same set of Pareto optima since those points are on the common outer boundary of the sets. Thus, formulating the set $U$ either way makes no difference for our results while the current formation facilitates our analysis.}
\end{equation}  

That $U$ is closed and convex is arguably a mild assumption that is satisfied if, for instance, $U$ is induced by utility functions $(u_i)_{i\in I}$  that are upper semicontinuous and concave on a choice set $X$ that is compact and convex.%
%
%
\footnote{Note that compactness and convexity of the choice set $X$ are satisfied if, for instance, all lotteries of social outcomes, which are in turn finite, or more generally compact, are feasible.}
%
%

For any $u,v\in \mathbb R^n$, we write $v \ge u$ if $v_i \ge u_i$ for all $i \in I$, $v > u$ if $v \ge u$ and $v \ne u$, and  $v\gg u$ if $v_i > u_i$ for all $i \in I$.    We say a point  $u$ in $U$ is \emph{Pareto optimal} with respect to $U$ if there exists no $v \in U$  with $v > u$.  Let $U^P\subset U$ denote the set of all Pareto optimal points (or, more simply, Pareto optima). 

For any $\phi \in \R^n$, consider the optimization problem: 
\begin{align}\label{eq:linear-optimization}
	\max_{u \in U} \langle \phi, u \rangle,
\end{align}
where $\langle \phi, u \rangle := \sum_{i=1}^n \phi_i u_i$. We call $\phi$ a \emph{weight vector}.
Throughout the paper, we only consider nonzero weight vectors (i.e., $\phi \ne 0$). We say a  point $u \in U$ \emph{maximizes} the weight vector $\phi$ over $U$ (or simply \emph{maximizes} $\phi$) if $u$ is a solution  to \cref{eq:linear-optimization}. We call a weight vector $\phi$  \emph{nonnegative} if  $\phi >0$  and \emph{positive} if $\phi \gg 0$. For any vector $v \in \R^n$, the \emph{support} of $v$ is the set of indices where $v$ is nonzero; i.e., $\supp v:= \left\{ i \in I  \mid v_i \neq 0\right\}$. A positive $\phi$ has full support; i.e., $\supp \phi = I$.

Our discussion uses the language of hyperreal numbers. We introduce the basics here. The set of \emph{hyperreal numbers} $^* \mathbb R$ consists of real numbers as well as ``infinite'' and ``infinitesimal'' numbers. Infinite numbers are larger than any real number. Infinitesimal numbers (or simply infinitesimals) are closer to $0$ than any real number. A formal definition of $^* \mathbb R$ is somewhat tedious and we will not reproduce it here. Instead, we refer the reader to \cite{goldblatt2012lectures}. Although the use of hyperreal numbers (in what is termed \emph{nonstandard analysis}) is not completely standard in economics, it has been used in a variety of settings including choice under uncertainty  \citep{blume1991lexicographic}, game theory \citep{dilme2022lexicographic}, and exchange economies   \citep{brown1975nonstandard}. See \cite{anderson1991non} for a survey of applications of nonstandard analysis to economics.

Important properties of the set of hyperreal numbers for our purposes are that (i) $^* \mathbb R$ contains a (positive) infinitesimal number, i.e., an element $\epsilon \in {^*\mathbb R}$ such that $\epsilon < r$ for every positive real number $r$ while $\epsilon>0$, and that (ii) arithmetic operations such as addition and multiplication, as well as order relations, are well defined and extended from $\mathbb R$ to $^*\mathbb R$ in expected ways.

\section{Characterizations of Pareto optimality}\label{pareto-optimality-section}

This section presents our main result, \cref{theorem:characterize-PO}, that provides two alternative ``near'' (weighted) utilitarian characterizations of the set $U^P$ of Pareto optimal points of a given closed convex set $U$. To state these characterizations, we first introduce some additional terminology and definitions. These definitions will be interpreted  after the statement of \cref{theorem:characterize-PO}.

A sequence $\Phi = (\phi^1, \phi^2, \dots, \phi^T)$ of  weight vectors is \emph{nonnegative} if $\phi^t$ is nonnegative for every $t \in \{1,\dots,T\}$. We say that a  sequence $\Phi$ of weight vectors is \emph{eventually positive} if   $ \supp \phi^{t-1}  \subsetneq \supp \phi^t$ for all $t =2, \ldots, T$ and  $ \supp \phi^T =I$. Note that eventual positivity implies   $T \le n$ since the support \emph{strictly} grows along the sequence.

\begin{definition}[Sequential utilitarian welfare maximization  (SUWM)]\label{def:optimizing-set-of-normals}
We say $u \in U$ \emph{sequentially maximizes} a sequence $\Phi = (\phi^1, \phi^2, \dots, \phi^T)$ of  weight vectors over $U$ if \begin{equation} 
u \in  U^t :=  \arg\max_{u' \in  U^{t-1}}  \langle  \phi^t, u'  \rangle, \mbox{ for each }  t = 1,\dots,T,   \label{eq:define-u-t}
\end{equation}
where $U^0=U$. We say $u \in U$ \emph{sequentially maximizes utilitarian welfare} over $U$---or, more simply, $u$ is an \emph{SUWM solution} of $U$---if there exists a sequence $\Phi$ of nonnegative and eventually positive weight vectors such that $u$ sequentially maximizes $\Phi$. 
\end{definition}

The following definition uses the concept of hyperreals introduced in the preliminaries section. We call a  vector $\phi \in (^* \mathbb R)^n$ of hyperreal weights    \emph{simple} if there exists   a positive infinitesimal number   $\epsilon$ and a sequence $\Phi = (\phi^1,\phi^2,\dots,\phi^T)$ of nonnegative and eventually positive weight vectors  in $\R^n$ such that 
\begin{align}
    \phi = \sum_{t \in \{1,\dots, T\}} \epsilon^{t-1} \phi^t. \label{suwm-to-simple}
\end{align}
An example with two individuals can illustrate the restriction associated with a  ``simple'' hyperreal vector.  Consider the hyperreal weight vector $(1+\epsilon,1)$, with $\epsilon$ being a positive infinitesimal number.  This vector  is not simple. To see this, note that the only way to express the vector $(1+\epsilon,1)$ in the form $\phi = \sum_{t \in \{1,\dots, T\}} \epsilon^{t-1} \phi^t$ is to set $T=2$, $\phi^1=(1,1)$, and $\phi^2=(1,0).$ The sequence $(\phi^1,\phi^2)$ violates the eventual positivity requirement.
By contrast, the weight vector $(1+\epsilon,\epsilon)=(1,0)+\epsilon(1,1)$ is simple.  The relevance of the distinction between simple and nonsimple hyperreal vectors, as well as the role played by the former, will become  clear in \cref{social-preference-section}.

\begin{definition}[Simple hyperreal utilitarian welfare maximization (SHUWM)]\label{def:simple-hyperreal-welfare-function}
A social welfare function $W$ is a \emph{simple hyperreal utilitarian welfare function} if 
	\begin{align}
	W(u) = \langle \phi, u \rangle = \sum_{t \in \{1,\dots,T\}} \epsilon^{t-1} \langle  \phi^t,u \rangle, \label{hyperreal-welfare-function}
	\end{align} 
where $\phi$ is a simple hyperreal weight vector. We say $u \in U$ maximizes a simple hyperreal utilitarian welfare function over $U$---or, more simply, $u$ is a \emph{SHUWM solution} of $U$---if there exists a simple hyperreal utilitarian welfare function $W$ such that $W(u) \ge W(v)$ for all $v \in U$.
\end{definition}

We can now state the first main result of the paper. 

\begin{theorem}\label{theorem:characterize-PO}
Let $U$ be a closed convex subset of $\mathbb R^n$ and let $u$ be a vector in $U$. Then, the following are equivalent:
\begin{enumerate}[label=(\roman*)]
    \item $u$ is Pareto optimal with respect to $U$.
    \item $u$ is a SUWM solution of $U$.
    \item $u$ is a SHUWM solution of $U$.
\end{enumerate}
\end{theorem}
\begin{proof}
See \cref{sec:only-if}.
\end{proof}

In the remainder of the section, we will offer interpretations of this result and insights into its proof. 

We first start with the SUWM characterization of Pareto optimality implied by the equivalence between (i) and (ii). In SUWM, utilitarian welfare is maximized over multiple rounds for growing sets of agents until all agents are considered. From the  social choice perspective, one can imagine a utilitarian social planner  who  prioritizes some agents---that is, those  considered in earlier rounds of SUWM---and maximizes their (weighted) welfare before  others. To achieve Pareto optimality, the social planner must assign \emph{some} weights to all agents, but the welfare weights for some individuals (those who receive positive weights in later rounds) may need to be infinitely smaller than those for others (those who receive positive weights in the earlier rounds). SUWM allows such flexibility by placing positive weights on individuals in different rounds. The eventual positivity condition encodes the requirement of Pareto optimality that ``every agent's welfare counts'' since  the utility of each  agent  $i$ has a positive weight in some  round of welfare  maximization.

The equivalence between (i) and (ii) is easy to visualize with the example in \cref{fig:illustration-of-sequential-converse}, reproduced in \cref{fig:illustration-of-sequential-optimization}(a). 
\begin{figure}
  \centering
  \begin{subfigure}{.45\textwidth}
    \centering
    \includegraphics[scale=.9]{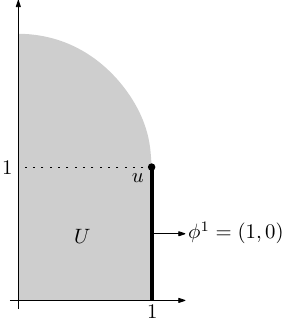}
    \caption{First round}
  \end{subfigure}%
  \hskip 20pt
  \begin{subfigure}{.45\textwidth}
    \centering
    \includegraphics[scale=.9]{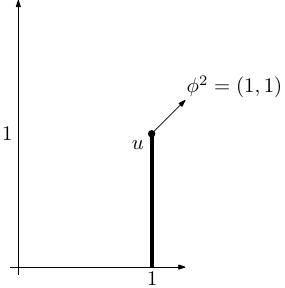}
    \caption{Second round}
  \end{subfigure}%
  \caption{Determining a Pareto optimal point in two rounds of sequential utilitarian welfare maximization.}
  \label{fig:illustration-of-sequential-optimization}
\end{figure}
In the first round, utilities are maximized within $U$ with weights $\phi^1$, which is maximized by the thick vertical segment containing $u$. One can interpret this as the social planner first maximizing the utility of agent 1 while disregarding the welfare of the other individual completely. Since agent 1 is indifferent among all of these points, the social planner seeks to engage in further optimization.  In the second (and last) round, utilities are again maximized but only within the vertical segment, now with an (arbitrary) nonnegative weight vector $\phi^2$ that places a positive weight on agent 2. Hence, $\Phi=(\phi^1, \phi^2)$ is eventually positive.  The weights $\phi^2$ determine $u$ as the unique maximizer, as illustrated in \Cref{fig:illustration-of-sequential-optimization}(b). The theorem shows that the flexibility in assigning the weights in different rounds in SUWM enables an exact characterization.

The equivalence of (ii) and (iii) in the theorem shows that the sequential optimization involved in SUWM can be encoded in a one-shot weighted utilitarian welfare maximization with the introduction of simple hyperreal weights. This introduction of hyperreals allows for some agents to be prioritized over others in the sense of being assigned positive weights in earlier rounds of SUWM. 
One can then interpret the former agents as carrying infinitely larger weights than the latter agents to constitute social welfare. The characterization in (iii) formalizes this idea by constructing the simple hyperreal weight vector $\phi =\sum_{t \in \{1,\dots,T\}} \epsilon^{t-1} \phi^t$. Since $\epsilon^s$ is infinitely larger than $\epsilon^t$ for any $t > s \ge 0$, the hyperreal vector $\phi$ assigns infinitely larger weights to the agents with higher priority than those with lower priority. For example, in \cref{fig:illustration-of-sequential-optimization} the vector $u = (1,1)$ maximizes the simple hyperreal utilitarian welfare function with hyperreal weights $\phi^1 + \epsilon \phi^2 = (1+\epsilon, \epsilon)$.\footnote{See Theorem 12.7 in \cite{soltan2015lectures}, reproduced as \Cref{fact-5} in the appendix.} While serving as a useful step toward our proof, this result lacks an important element that is fundamental in the economics context---that the weights be nonnegative and eventually positive.  A nontrivial and crucial part of our proof lies in showing that nonnegative and eventually positive weights can be found if and only if the face consists of Pareto optimal points. The proof of \cref{theorem:characterize-PO} in \cref{sec:only-if} provides additional details and discussion.


 Let us now explore some of the insights behind the proof of \cref{theorem:characterize-PO}. The argument showing that (i) implies (ii) exploits a remarkable parallel between our problem and the question in convex geometry pertaining to \emph{extreme faces} of a closed convex set.  An extreme face, or simply a {\it face,} $F$ of $U$ is its convex subset 
whose elements cannot be expressed as convex combinations of points outside that set. (An extreme point is a special case of a face comprised of a singleton.) Geometrically, Pareto optimal points of $U$ are made up of such  faces (a result we establish).  We say a hyperplane of $U$ ``exposes'' a face $F$ if it intersects $U$ precisely at  $F$, namely when $F$ constitutes the set of points that maximize a linear function.  A standard utilitarian welfare characterization of Pareto optima implies that the corresponding faces are  ``exposed'' by hyperplanes with  nonnegative weight vectors. From this perspective, the failure of standard weighted utilitarianism can be traced to the fact known in convex geometry that extreme faces may not always be exposed.  However, an important finding in that literature is that an extreme face is  ``eventually exposed,'' that is, the face can be represented by  the set of points that sequentially maximize  \emph{possibly negatively-}weighted sum of utilities.\footnote{See Theorem 12.7 in \cite{soltan2015lectures}, reproduced as \Cref{fact-5} in the appendix.}  While serving as a useful step toward our proof, this result lacks an element that is important for us and fundamental in the economics context---that the weights be nonnegative and eventually positive.  A nontrivial and crucial part of our proof lies in showing that nonnegative and eventually positive weights can be found if and only if the face consists of Pareto optimal points. 

 Next, the fact that (ii) implies (iii) follows since any SUMW solution constitutes a SHUWM solution with the simple hyperreal weights constructed using a sequence of the SUWM weights as in \eqref{suwm-to-simple}. 
Finally, we establish that (iii) implies (i), by observing that any SHUWM solution must be Pareto optimal, given the positivity of the simple hyperreal weights. 
%
%

\section{Axiomatic foundation for ``near'' utilitarianism}\label{social-preference-section}

In the previous section, we showed that ``near'' utilitarian welfare maximization---in the form of either SUWM or SHUWM---characterizes Pareto optima.  Here we provide an axiomatic foundation for these welfare criteria.  That is, we  identify axioms of welfare orderings represented by these social welfare criteria.

This exercise serves at least two purposes.  First,
one can view the preceding characterization (\cref{theorem:characterize-PO}) as providing a foundation for {\it some} version of utilitarianism.  It is important to ask exactly what social welfare ordering  corresponds to that version of utilitarianism.  Second, 
our version of utilitarianism relaxes standard utilitarianism by allowing for a sequence of welfare weights or for  hyperreal welfare weights in utilitarian welfare maximization. Identifying the social welfare orderings that justify such procedures will lay bare the precise nature of departure from those generating standard utilitarianism.  This difference will in turn make precise, and flesh out, the sense in which our utilitarianism is ``near'' the standard one. 

We begin with a state-of-the-art axiomatization of (weighted) utilitarianism. Let the social welfare ordering $R^*$ be a complete and transitive binary relation defined over $\mathbb{R}^{n}$,  the set of utility profiles of agents $I$, and let $P^*$ and $I^*$ denote the  strict  and indifferent parts of $R^*$, respectively.   For any $u\in \mathbb{R}^n$ and any real number $\delta>0$, let $B_{\delta}(u):=\{v\in \mathbb{R}^n:  ||v-u||<\delta\}$ be  the $\delta$-ball  centered at $u$.   Utilitarianism (with positive welfare weights) satisfies the following three axioms:   
\begin{itemize}
    
  \item \textsf{Pareto Principle}:   for any $u>v$, we have $u P^* v$.

    \item \textsf{Invariance}: for any $u,v\in \mathbb{R}^{n}$, $a \in \mathbb{R}^{n}$ and $b \in \mathbb{R}_{++}$, if $u R^* v$, then $(a+bu)  R^* (a+bv)$.

       \item  \textsf{Continuity}:  If $u P^* v$, then there exists $\delta>0$ such that 
       $u' P^* v$ for all $u'\in B_{\delta}(u)$.
\end{itemize}
 \textsf{Pareto Principle} requires the welfare ordering to preserve the Pareto domination order.
\textsf{Invariance} means that rescaling utility profiles by adding the same constant vector  or by multiplying with the same positive coefficient does not alter their  social welfare ordering. This property permits just the right scope of interpersonal utility comparison that yields linear social welfare evaluation.  \textsf{Continuity} means that perturbing the utilities of possibly all agents slightly does not alter social welfare ordering.  \textsf{Continuity} forces welfare weights on alternative individuals to be of the same order of magnitude at the margin, meaning that no individual is treated infinitely better or worse compared with the others. Theorem 4.2-(2) of \cite{d2002social} shows that utilitarianism is the only social welfare  ordering that satisfies the three axioms:

\begin{theorem}\label{thm:dasprement-gevers}
    [D'Asprement-Gevers] Let $R^*$ be a social welfare ordering. The following statements are equivalent:\footnote{Theorem 4.2-(1) of \cite{d2002social} gives the characterization with nonnegative welfare weights when Pareto is replaced with a weaker Pareto-like condition.}
    \begin{enumerate}[label=(\roman*)]
        \item $R^*$ satisfies the \textsf{Pareto Principle}, \textsf{Invariance,} and \textsf{Continuity},
        \item There exists $\phi\in \mathbb{R}^{n}_{++}$ such that 
 $u R^* v$  if and only if $\sum_{i\in I} \phi_i u_i\ge \sum_{i\in I} \phi_i v_i.$
    \end{enumerate} 
\end{theorem}   

It is easy to see that \textsf{Continuity} fails in our simple hyperreal utilitarian welfare function.  Recall
that in \cref{fig:illustration-of-sequential-optimization}, the Pareto optimum $u=(1,1)$ maximizes the simple hyperreal utilitarian welfare function $W(\cdot)$ with weights $(1 + \epsilon, \epsilon)$, where $\epsilon>0$ is an infinitesimal. Hence, $W(1,1)> W(1,1/2)$, for example. Yet,  for any real number $\delta>0$,
 $W(1-\delta, 1-\delta) < W(1,1/2)$, so $W$ fails \textsf{Continuity}.  Indeed, it is well-known that lexicographic preference orderings cannot be represented by a continuous utility function (see, for instance, pages 46-7 of \cite{mas1995microeconomic}).  

 While continuity in its general form cannot be satisfied, the additional structure of our {\it near} utilitarianism may accommodate some weaker version of continuity.  Indeed, we identify the precise form of weakening of  \textsf{Continuity} compatible with our near utilitarianism.  For each agent $i\in I$ and  a real number $\delta>0$, let $B_{\delta}^i(u):=\{v\in \mathbb{R}^n:  |v_i-u_i|<\delta, v_j=u_j, \forall j\ne i\}$ be the $\delta$-ball around $u$ but {\it only in the $i$-th coordinate}. This notion allows us to define:

\begin{itemize}
     \item \textsf{Weak Continuity}: for any $ u P^* v$, there exist  $i\in I$ and $\delta >0$  such that $u' P^* v$ for all $u'\in B_{\delta}^i(u)$.
\end{itemize}

\textsf{Weak Continuity} requires the social welfare ordering to be robust to perturbations  of only {\it some} individual agent's utility, and not necessarily to {\it all} possible  perturbations of the utility profile,  as required by \textsf{Continuity}.  We next present the desired axiomatization of our ``near'' weighted utilitarian welfare functions. To this end, we adapt SUWM to welfare orderings in a natural way.

 \begin{definition} We say $u$ \emph{sequentially  utilitarian welfare dominates} $v$ {\it according to} $\Phi$ if $u$  sequentially maximizes utilitarian welfare over $\{u,v\}$ 
 according to $\Phi$.\footnote{In words, $u$ sequentially  utilitarian welfare dominates $v$, if there exists a sequence of eventually positive weight vectors
  $\Phi = (\phi^1, \phi^2, \dots, \phi^T)$ satisfying: either  $\phi^t u= \phi^t v$ for all $t$ or  there exists $\tau \ge 1$ such that  $\langle \phi^t, u \rangle = \langle \phi^t, v \rangle$ for all $t  < \tau$ and $\langle \phi^{\tau}, u \rangle > \langle \phi^{\tau}, v \rangle$.}\footnote{Note that this ranking leads to   a \emph{rational} order, i.e.,  a binary relation that is reflexive, complete, and transitive. To see the transitivity (since the other properties are obvious),  consider profiles $u,v,$ and $w$ such that $u$ and $v$ sequentially utilitarian welfare dominate $v$ and $w$, respectively: that is,  $\langle  \phi^\tau, u \rangle >  \langle  \phi^\tau, v  \rangle$ for some $\tau$ and $\langle  \phi^t, u \rangle =  \langle  \phi^t, v  \rangle$ for all $t < \tau$ while $\langle  \phi^{\tau'}, v \rangle >  \langle  \phi^{\tau'}, w  \rangle$ for some $\tau'$ and $\langle  \phi^t, v \rangle =  \langle  \phi^t, w  \rangle$ for all $t < \tau$. Then, letting $\tau'' =\min \{ \tau, \tau' \}$, we have $\langle  \phi^{\tau''}, u \rangle >  \langle  \phi^{\tau''}, w  \rangle$ and $\langle  \phi^t, u \rangle =  \langle  \phi^t, w  \rangle$ for all $t < \tau''$, implying $u$ sequentially utilitarian welfare dominates $w$.}
\end{definition}

\begin{theorem}\label{social-preference-theorem} Let $R^*$ be a social welfare ordering. The following statements are equivalent.

\begin{enumerate}[label=(\roman*)]
    \item $R^*$ satisfies the  \textsf{Pareto Principle}. \textsf{Invariance,} and \textsf{Weak Continuity},
    \item There exists a nonnegative and eventually positive sequence of weight vectors $\Phi=(\phi^1, \phi^2, ..., \phi^T)$ such that for any $u,v \in \mathbb{R}^n$,  $u R^* v$ if and only if $u$ sequentially utilitarian welfare dominates $v$ according to $\Phi$. 
    
\item  
There exists a simple hyperreal weight vector $\psi\in ({}^* \mathbb R_{++})^n$ such that for any $u,v \in \mathbb{R}^n$, 
$u R^* v$ if and only if $ \sum_{i\in I} \psi_i u_i\ge \sum_{i\in I} \psi_i v_i.$\footnote{As with the order based on sequential utilitarian welfare domination, an order based on  this ranking  is also rational (as  hyperreal numbers  follow  the same ordering system as real numbers).}
\end{enumerate}
\end{theorem}
\begin{proof} See \cref{s:proof-of-social-preference-theorem}.
\end{proof}
 
For (iii), the restriction to {\it simple} hyperreal utilitarian welfare functions is crucial. Recall  that simplicity captures the eventual positivity of the  weight vectors  required in our SUWM, and this feature is essential for a hyperreal utilitarian welfare  function to retain the weak continuity property.  To see this, recall  the non-simple weight vector $\psi:=(1+\epsilon,1)$ with an infinitesimal $\epsilon>0$ discussed in \Cref{pareto-optimality-section}.   The welfare function associated with this weight vector fails \textsf{Weak Continuity}. To see this, consider utility profiles $u:=(1,0)$ and $v:=(0,1).$ We have $u P^* v$ because $\langle \psi, u \rangle=1+\epsilon>1=\langle \psi, v \rangle.$ However, for any  $i\in I$,  real number $\delta>0$, and $u' \in B^i_\delta(u)$ with $u' < u,$ we have $\langle \psi, u' \rangle<1=\langle \psi, v \rangle,$ so $u' P^* v$ does not hold, a violation of \textsf{Weak Continuity}.  The reason for this difference is that this non-simple hyperreal function cannot be supported by a nonnegative and eventually-positive sequence of weight vectors required by SUWM.

By contrast, consider the simple hyperreal utilitarian welfare function $W(\cdot)$  with weights $(1 + \epsilon, \epsilon)$ that exposes $u$ in  \cref{fig:illustration-of-sequential-converse}.   While $W$ fails to be continuous, it is weakly continuous.  Although $W$ fails \textsf{Continuity}, it satisfies \textsf{Weak Continuity}. Recall $W(u)>W(v)$, for $u=(1,1)$ and $v=(1,1/2)$. And, $W(u')>W(v)$ for any $u'\in B^2_{\delta}(u)$ if  $\delta\in (0, 1/2)$.

To visualize some of this discussion, \cref{fig:axiom-venn-diagram} illustrates the relationship between the axiomatizations of different notions of utilitarianism described in \cref{thm:dasprement-gevers,social-preference-theorem,hyperreal-swf-theorem} (the last result is discussed in the next section).   

\begin{figure}[htb]
	\centering
	\includegraphics[scale=.9]{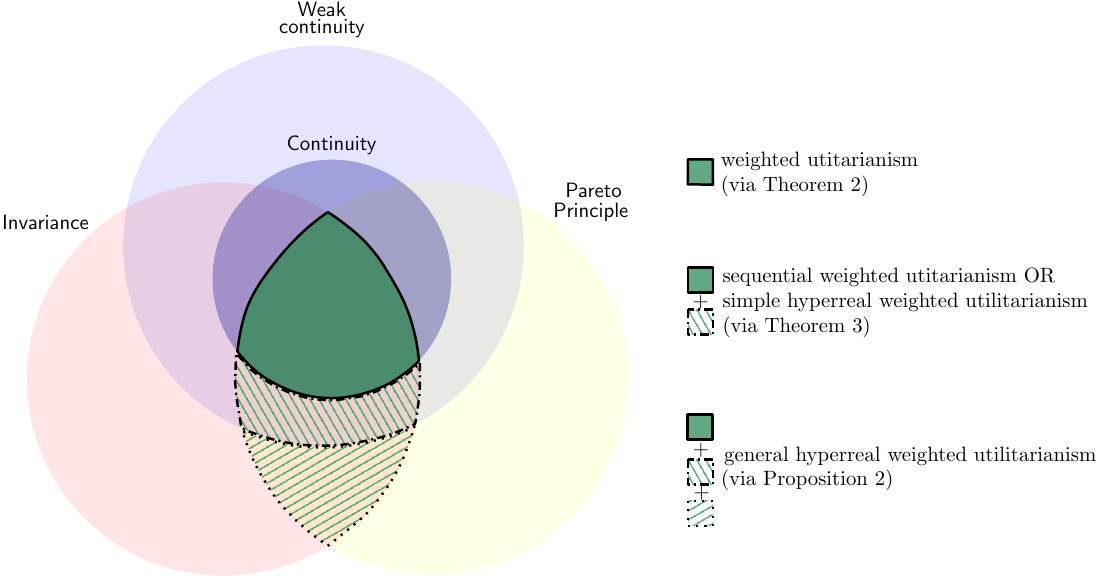}
	\caption{Illustrating the axiomatizations of different notions of utilitarianism. The universe is the set of all social welfare orderings. 
 }
		\label{fig:axiom-venn-diagram}
\end{figure}

\section{Other characterizations of Pareto optimality}\label{s:discussion}

In this section, we discuss other characterizations of Pareto optima.  As will be seen, these characterizations  are not only related to  our ``near''-utilitarian welfare maximizations but they also highlight certain aspects of them and thus  help to interpret and understand them.  At the same time, we will show that they differ in their axiomatic properties from our ``near''-utilitarian welfare characterizations.  Our discussion will therefore illustrate that the axiomatic properties of our ``near''-utilitarian characterizations are special and not shared by other possible characterizations of Pareto optima.

\paragraph{Weighted utilitarianism with general hyperreal weights.}

As we discussed, the restriction to {\it simple} hyperreal utilitarian welfare functions disciplines them to resemble utilitarianism.  At the same time, hyperreal utilitarian welfare maximization, with no restriction, also characterizes Pareto optimality.

\begin{proposition}\label{prop:hyperreal}
Let $U$ be a closed convex subset of $\mathbb R^n$. Then, $u \in U$ is Pareto optimal if and only if 
\begin{equation*}
u \in \argmax_{u' \in U} \langle \psi,u' \rangle,
\end{equation*}
for some weight vector $\psi=(\psi_i)_{i \in I}\in ({}^* \mathbb R_{++})^n$.

\end{proposition}
\begin{proof}
See \cref{sec:hyperreal-proof} in the Supplementary Appendix. 
\end{proof}

This proposition highlights the ability to assign an infinitely larger weight to one agent relative to another as a crucial feature that enabled SHUWM to characterize Pareto optimality.  Compared with simple hyperreal utilitarian welfare functions, however, the class of general hyperreal utilitarian welfare functions is too large to be declared near-utilitarian.  As we already saw, the class includes non-simple hyperreal functions that do not satisfy \textsf{Weak Continuity}, let alone \textsf{Continuity}.  

\Cref{theorem:characterize-PO} tells us that such non-simple functions are not needed for characterizing Pareto optimality. To illustrate their superfluity, recall the non-simple hyperreal vector $\psi=(1+\epsilon, 1)$, where $\epsilon$ is a positive infinitesimal number.  We can see that any such vector can be replaced  by a simple hyperreal vector (which does satisfy \textsf{Weak Continuity}), in this particular case, a real vector, with no loss on the ability to characterize Pareto optima.\footnote{In this case, the real weight vector $(1,1)$ can be used in place of $\psi$ in the sense that every  Pareto optimal point that maximizes the hyperreal  weight  vector $\psi$ also maximizes the real weight vector $(1,1)$.} We showed that the welfare function associated with  $\psi=(1+\epsilon, 1)$   fails \textsf{Weak Continuity}.
Indeed, the next proposition shows that  the class of hyperreal utilitarian welfare functions in \Cref{prop:hyperreal} entails no restriction on social welfare orderings beyond the \textsf{Pareto Principle} and \textsf{Invariance}.
\begin{proposition} \label{hyperreal-swf-theorem} Let $R^*$ be a social welfare ordering. The following statements are equivalent.
       \begin{enumerate}[label=(\roman*)]
    \item  $R^*$ satisfies the  \textsf{Pareto Principle}  and \textsf{Invariance}. \label{item-1}
\item There exists a hyperreal weight vector $\psi\in    ({}^* \mathbb R_{++})^n$ such that 
$u R^* v$ if and only if $ \sum_{i\in I} \psi_i u_i\ge \sum_{i\in I} \psi_i v_i.$  
\label{item-2} 
\end{enumerate}
\end{proposition}
\begin{proof}
See \cref{ss:proof-of-hyperreal-swf-theorem} in the Supplementary Appendix.
\end{proof} 
\cref{fig:axiom-venn-diagram} illustrates the differences in how  general hyperreal utilitarian and other utilitarian welfare functions are axiomatized.

\paragraph{Sequential Nash bargaining.}   The second characterization is motivated by an institutional/behavioral implementation of Pareto optima. As is well known from 
the second fundamental  welfare theorem, a Pareto optimal allocation, say in an exchange economy,  can be implemented by a competitive equilibrium under a suitable endowment.%
\footnote{As an aside, in  \cref{s:second-welfare-thm} in the SupplementaryAppendix, we illustrate how to use some of the techniques established in our proof of \cref{theorem:characterize-PO} to offer a new proof  of the second welfare theorem that allows for weaker assumptions  than  the standard treatment.  We discuss this more in the paper's conclusion section.}
In the same spirit, one may ask what institution implements a given Pareto optimum in a more general environment. Our SUWM characterization of Pareto optima allows one to envision sequential negotiations as fulfilling this goal. That is, any Pareto optimal outcome can be seen as emerging from a sequence of negotiations among individuals whose relative bargaining powers in round $t$ are determined by the welfare weights  $\phi^t$ in the corresponding round of SUWM characterization.

To be specific, suppose each agent has a disagreement utility, normalized as zero, that is less than any Pareto optimal utility---i.e., $u\gg0$ for every $u\in U^P$.   Consider a collection of  bargaining units $\mathcal I=\{I^1, \dots, I^T\}$ satisfying $I^{t-1} \subsetneq I^t$  for each $t=2,\dots,T$  and $I^T =I$. Imagine that the agents engage in a sequence of bargaining: in round $1$, agents in $I^1$ bargain from $U$ to a set $V^1\subset U$, and in round $t=2,\dots,T$, agents in set $I^t$ bargain from $V^{t-1}$ to a set $V^t$. The bargaining protocol in each round $t$ is a generalized Nash bargaining game \citep{kalai1977nash} in which each agent $i\in I^t$ has a bargaining   power  $\psi_i^t >0$ such that $\sum_{i\in I^t}\psi_i^t=1$ and a disagreement payoff $0$. More specifically, for bargaining units $\mathcal{I} =\{I^1, \dots, I^T\}$ and bargaining powers $\Psi  = (\psi^1,\ldots, \psi^T)$ satisfying the above requirement, 
 we let $V^t := \arg\max_{u \in V^{t-1}} \prod_{i\in I^t}  u_i^{\psi_i^t}$ for each $t =1,\ldots,T$ with $V^0: = U$. Then, we call any $u \in V^T$  a {\it sequential Nash bargaining solution} (SNBS) over $U$ for $\mathcal{I}$ and $\Psi$, and call  $u$ an SNBS over $U$ if there exist such $\mathcal{I}$ and $\Psi$.

Observe now that SNBS implements the SUWM procedure for the logarithmic transforms of utilities.  Namely,  $u$ is an SNBS  over   $U\subset \mathbb{R}_{++}^n$ if and only if $v:=(\ln u_1, ..., \ln u_n)$ is an SUWM solution of $V:=\{(\ln u_1', ..., \ln u_n'): (u_1', ..., u_n') \in U\}$. This connection also makes  it clear that SNBS provides another characterization of  Pareto optima.

\begin{proposition}\label{cor:nash-bargaining} A vector $u \in U \cap \mathbb{R}^n_{++}$ is Pareto optimal if and only if  $u$ is an SNBS over $U$.
\end{proposition}
\begin{proof}
See \cref{s:nash-bargaining} in the Supplementary Appendix.
\end{proof}

This result provides a  behavioral interpretation of our near-weighted utilitarian welfare maximization.  Despite this close connection, we will see that the SNBS characterization differs in the social welfare ordering it induces from our near-weighted utilitarian characterizations. To see this, we first define the welfare ordering induced by SNBS. We say $u$ \emph{sequentially Nash welfare dominates $v$} according to bargaining units $\mathcal{I}$ and  bargaining powers $\Psi$ if $u$ is an  SNBS over $\{ u, v\}$ for  $\mathcal{I}$ and  $\Psi$.

Since SNBS implements the SUWM procedure for the logarithmic transforms of utilities, \Cref{social-preference-theorem}  implies that the following axiom would fulfill the same role as \textsf{Invariance}. 

\begin{itemize}
    \item \textsf{Log Invariance:} for any $u, v \in \mathbb{R}^n_{++}$, if $u R^* v$, then  $u' R^* v'$ for any $u',v' \in \mathbb{R}^n_{++}$ such that, for some $a \in \mathbb{R}^n$ and $b \in \mathbb{R}_{++}$,  $\ln u'_i =a_i + b \ln u_i $ and $\ln v'_i  = a_i + b \ln v_i$ for all  $i\in I$.  
\end{itemize}

Combining this axiom with the  \textsf{Pareto Principle} and \textsf{Weak Continuity} defined earlier, we obtain the following axiomatization of the welfare ordering based on SNBS.%

\begin{corollary} \label{SNBS-axiomatization} Let $R^*$ be a social welfare ordering defined on $\mathbb{R}^n_{++}$. Then, the following statements are equivalent.
\begin{itemize}
    \item[(i)] $R^*$ satisfies the \textsf{Pareto Principle}, \textsf{Log Invariance}, and \textsf{Weak Continuity}.
    \item[(ii)]   There exist  bargaining units $\mathcal{I}$ and bargaining powers $\Psi$ such that for any $u,v \in \mathbb{R}^n_{++}$, $u R^* v$ if and only if $u$ sequentially Nash welfare dominates $v$ according to $\mathcal{I}$ and $\Psi$.  
\end{itemize} 
\end{corollary}
\begin{proof}
See \cref{s:snbs-axiomatization} in the Supplementary Appendix.
\end{proof}

In particular, this corollary implies that while SNBS characterizes Pareto optimality, the welfare orderings implied by the criterion depart further from utilitarianism than our ``near''-utilitarian welfare criteria. While it shares the  \textsf{Pareto Principle} and \textsf{Weak Continuity}, it generally fails \textsf{Invariance}.

 \paragraph{Piecewise linear concave welfare function.} Some readers may not like the sequentiality of SUWM or the use of hyperreal numbers in SHUWM. This observation leads to the question of whether it is possible to characterize Pareto optima by a one-shot maximization of a real-valued welfare function. For such a characterization, the welfare function cannot be weighted utilitarian. In particular, the function must be nonlinear. Can we achieve the characterization with minimal relaxation of the  linearity? This motivates the following approach.
 
A social welfare function $W$ is a \emph{piecewise linear concave (PLC) welfare function} characterized by  $(\psi^1,\psi^2,\dots,\psi^t)$  if  
\begin{align}\label{eq:plc_functon} 
W(v) = \min_{t\in \{1,\ldots,T\}} \langle \psi^t, v \rangle,
\end{align} 
where $\psi^t \in \mathbb{R}^n_+$ for each $t$. One candidate for the weight vectors $(\psi^1,\psi^2,\dots,\psi^T)$ to construct  a PLC welfare function are those identified in the SUWM characterization; i.e., eventually positive weights. However,  the characterization  does \emph{not} hold without an auxiliary condition. For this condition, let us  say that a PLC welfare  function $W$   \emph{achieves its maximum over $U$ via eventually positive weights} if       (i) $ (\psi^1,\psi^2,\dots,\psi^T)  $ is nonnegative and eventually positive and  (ii)     for all  $v \in \arg\max_{u'\in U} W (u')$, $W(v) = \langle \psi^T, v \rangle $. 

\begin{proposition} \label{prop:PLC-char}
Let $U$ be a closed convex subset of $\mathbb R^n$. Then, $u \in U \cap \mathbb{R}_{++}^n$ is Pareto optimal if and only if  it maximizes a PLC welfare  function that achieves its maximum over $U$ via eventually positive weights.\footnote{We focus on points $u \in \R^n_{++}$ for technical simplicity. This is not a substantive restriction because the economic environment is arguably unchanged when a constant is added to all utility profiles.
}\footnote{This proposition may be reminiscent of construction of a PLC utility function based on an individual's choice data (see \cite{afriat1967}).   
The PLC social welfare function reveals the planner's preferences for agents' utilities similarly to how Afriat's PLC utility function reveals an individual's preferences for alternative goods.
Note, however, that there are clear differences. The multiple linear components of our PLC welfare function result from multiple welfare weights corresponding to the successive rounds of SUWM.  By contrast, the linear components in  Afriat's construction reflect different budget lines a consumer faces in different choice scenarios. 
Moreover, the role played by the auxiliary condition to ensure every agent's welfare counts has no analogue in Afriat's characterization.} 
\end{proposition}
\begin{proof}
See \cref{sec:plc-cha-proof} in the Supplementary Appendix.
\end{proof}
The role of the auxiliary condition is to prevent a Pareto suboptimal point from maximizing the PLC function (so that the ``if'' direction holds).  To see it, observe that for any Pareto suboptimal point $u$, one can find $v > u$ so that $W(v) \ge W(u)$.  If $u$ were a maximizer of $W$, then the auxiliary condition would  require $W (v) = \langle  \psi^T, v \rangle = \langle  \psi^T, u \rangle = W (u) $ or $\langle  \psi^T,  v-u \rangle =0$, which cannot hold since $\psi^T \gg 0$ and $v > u$.      
While achieving the goal of characterizing Pareto optima,  the auxiliary condition also captures  the main feature   of SUWM that every agent's welfare must count as it requires a PLC function to be maximized via a weight vector that puts a positive weight on every agent's utility.

While our PLC welfare functions successfully characterize Pareto optima, we regard them to be further away from utilitarianism than our ``near''-utilitarian welfare criteria. This is because, to our knowledge, no natural axioms characterize PLC welfare functions. In fact, it is not even obvious how to formulate a PLC function as a social welfare ordering in the face of the auxiliary condition. For instance, define the binary relation $R^*$  by $u R^* v$  if $W(u) \ge W(v)$ and $W(u)=\langle \psi^T,u \rangle$. Note that the condition $W(u)=\langle \psi^T  ,u\rangle$ is an adaptation of the auxiliary condition to the context of social welfare ordering.
Then, $R^*$ is not necessarily a complete binary relation, as the
following example shows. 

\begin{example}
Let there be two agents $1$ and $2$, $T=2$, $  \psi^1 =(1,0)$, $ \psi^2 =(1,1)$, $u=(1,1)$ and $v=(0,0)$. Then, we have $W(u)=1>0=W(v)$ while  $W(u)=1<2= \langle  \psi^2 ,u \rangle$, so  neither $u R^* v$ nor $v R^* u$ holds. Hence,  $R^*$ is not complete.\end{example}

\section{Conclusion}\label{s:conclusion}

We have provided two characterizations of Pareto optimal solutions of a closed convex set that are ``near'' to weighted utilitarian maximization in an axiomatic sense.  They arise from relaxing the \textsf{Continuity} axiom that defines weighted utilitarian to a \textsf{Weak Continuity} axiom. We have shown that other characterizations of Pareto optimality are more ``distant'' from weighted utilitarianism because they violate more of its defining axioms.  These results constitute significant progress in clarifying the connection between Paretian and utilitarian notions that are foundational to welfare economics.
 
 Although our paper directly worked with the space of utility profiles $U$, our results drive implications for problems stated in the choice space $X$. Indeed, examining the structure of what points in the choice set give rise to Pareto optima has been a major focus in the multiobjective optimization literature. An early contribution in that literature is \cite{charnes1967management}, who showed an  equivalence between the problem of finding Pareto optimal solutions (in the choice set $X$) 
and that of solving a constrained nonlinear programming problem.  Following their contribution, techniques in nonlinear programming were utilized to characterize Pareto optima under various conditions \citep{ben1977necessary,van1994saddle,glover1999dual,ben1980characterization} all of which require some form of differentiability of the utility functions. We believe further investigation into our approach may have the potential to add to this literature in at least two aspects. First, our characterization does not assume any form of differentiability. 
 Indeed, the subtlety of non-exposure of Pareto optimal faces can also arise when utility functions are not smooth, as is often the case.   
Our methods may suggest ways to handle Pareto optimality when differentiability fails. Second, our methods may suggest a bridge between existing results in the choice space and results in the utility possibility space, where notions of (sequential) welfare maximization are salient and allow for more natural economic interpretations. Indeed, none of the characterizations in the above references speak to notions of welfare maximization.

A second area of future work would be to examine how the notion of exposure  can be used to enhance separating hyperplane arguments that may arise in other economic settings. For instance, the second welfare theorem relies on the existence of a  strictly positive weight vector for  a  supporting hyperplane (which  constitutes equilibrium prices).  One can prove this  with a weaker assumption than in the existing proof of the theorem by  leveraging the idea of exposing a Pareto optimal point---which is a target Pareto efficient allocation---, as we show in the Supplementary Appendix (see \cref{s:second-welfare-thm}).   
We believe there is scope to explore other economic settings where separating hyperplane arguments are used and similarly relax the conditions needed to ensure strict positivity when Pareto optimality (in combination with notions of exposure) may be used to assure the existence of a separating hyperplane with a positive weight vector. 

A third area of future work is to extend the characterization presented in this paper to the case of infinite-dimensional economies. This is not a straightforward extension. Our argument in the finite-dimensional case depends on a termination condition that counts dimension. In the infinite-dimensional case, this termination condition is not accessible to us. Generalization would likely require a set convergence argument (for instance, using Hausdorff or Kuratowksi set-based metrics) that avoids discussion of dimension.

\appendix 


\section{Appendix: Proof of \cref{theorem:characterize-PO}}\label{sec:only-if}

\subsection{Proof of (ii) $\Rightarrow$ (iii)}
Given the sequence $\Phi= (\phi^1, \ldots, 
\phi^T)$ that is sequentially maximized by $u$,  let us construct a simple hyperreal weighted welfare function $W(\cdot)$ as in \eqref{hyperreal-welfare-function}. Letting $U^0,U^1,\dots,U^T$ be the notation used in \Cref{def:optimizing-set-of-normals},  observe first that 
	\begin{align}\label{inequality-1}
	W(v)	 =\langle \phi, v \rangle=\langle \phi, u \rangle   = W (u) \text{ for every $v \in U^T$},
	\end{align}  by construction of $\phi$ and $U^T$. Next, consider any $v \not \in U^T$. Then, there exists $\tau \in \{1,\dots,T\}$ such that $\langle \phi^t, v \rangle=\langle \phi^t, u \rangle$ for all $t<\tau$ and $\langle \phi^{\tau}, v \rangle < \langle \phi^{\tau}, u \rangle$. Therefore,
	\begin{align}
		\langle \phi, u \rangle -  \langle \phi, v \rangle & = \sum_{t=1}^T  \epsilon^{t-1} \langle \phi^t,u \rangle - \sum_{t=1}^T  \epsilon^{t-1} \langle \phi^t,v \rangle  \nonumber  \\
		&= \epsilon^{\tau-1} \left (\langle \phi^{\tau},u \rangle -\langle \phi^{\tau},v \rangle \right ) + \sum_{t \in \{\tau+1,\dots,T\}}  \epsilon^{t-1} \left (\langle \phi^{\tau},u \rangle -\langle \phi^{\tau},v \rangle \right ) \nonumber \\
		&= \epsilon^{\tau -1} \left  [\left (\langle \phi^{\tau},u \rangle -\langle \phi^{\tau},v \rangle \right ) + \sum_{t \in \{\tau +1,\dots,T\}}  \epsilon^{t-\tau} \left (\langle \phi^{\tau},u \rangle -\langle \phi^{\tau},v \rangle \right ) \right ] \label{eq:square-bracket}
	\end{align} 
	where all arithmetic operations are valid because $^* \mathbb R$ is an ordered field (Theorem 3.6.1 of \cite{goldblatt2012lectures}).
	Because $\epsilon$ is an infinitesimal number strictly larger than zero, and $\langle \phi^{\tau},u \rangle-\langle \phi^{\tau},v \rangle$ is a   positive real number, the expression inside the square bracket in \cref{eq:square-bracket}  is   positive, so  $  \langle \phi, u \rangle -  \langle \phi, v \rangle >0$ (see pages 50 and 51 of \cite{goldblatt2012lectures}). Thus, we have shown that 
	\begin{align}\label{inequality-2}
	W (v) = 	\langle \phi, v \rangle <  \langle \psi, u \rangle    = W(u) \text{ for every $v \not\in U^T$}.
	\end{align}
	By equations \Cref{inequality-1,inequality-2}, we have established that $u \in \arg\max_{v \in U}  W (v).$

\subsection{Proof of  (iii) $\Rightarrow$ (i)}  Suppose for contradiction  that $u$ maximizes the simple hyperreal weighted welfare function $W$  as in \eqref{hyperreal-welfare-function} but   is not Pareto optimal. Then, there exists $v \in U$ such that $v> u$.  Since the sequence $(\phi^1, \ldots, \phi^T)$ is nonnegative and eventually positive, we have  $ \phi_i = \sum_{t \in  \{1, \ldots, T\}}\epsilon^t   \phi_i^t >0$ for each $i \in I$.   Thus,  $W(v)  - W(u)= \langle \phi,v \rangle-\langle \phi,u \rangle=\sum_{i \in I} \phi_i (v_i-u_i)>0$, a contradiction.

\subsection{Proof of (i) $\Rightarrow$ (ii)}
  We begin with some preliminaries before providing the proof in \Cref{s:proof}. 

\subsubsection{Preliminaries}

Let us first introduce a few concepts that are crucial for our  analysis. A \emph{face} of  $U$ is a nonempty convex subset $F$ of $U$ with the property that if $u \in F$ and $u = \alpha v + (1 - \alpha) w$ for some $0 < \alpha < 1$ and $v, w \in U$ then it must be that $v, w \in F$. That is, $F$ is a face of a convex set if none of its elements are convex combinations of elements that lie outside of $F$.  A \emph{proper face} of $U$ is a face of $U$ that  is a proper subset of $ U$. 
A face  $F$ is  an \emph{exposed face} of $U$ if there is a weight vector $\phi \in \R^n$  such that  $F = \arg\max_{u \in U} \langle \phi, u \rangle$. In this case, we say that $\phi$ exposes $F$ out of $U$. A face need not be exposed, as can be seen  in \cref{fig:illustration-of-sequential-converse}, where $u$ is a singleton face that is not exposed.  

For any convex subset $G$ of $U$, its \emph{relative interior} $\text{ri}(G)$ is the set of all $u \in G$ such that for every $u' \in G$ there exists  $\lambda > 0$ such that $u + \lambda (u - u') \in G$.  
 
The following lemma shows a face structure of a convex set that is interesting in itself and useful for our analysis.
\begin{lemma}[Corollary 11.11(a) in \cite{soltan2015lectures}]\label{lemma:rockafellar-union}
For a convex set $U \subseteq \R^n$,  the collection of relative interiors of faces---that is, $\{\mathrm{ri}(F) : F \mbox{ is a  face of  } U \}$---forms a partition of $U$. 
\end{lemma}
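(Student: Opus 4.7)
The plan is to establish two things: that every $u \in U$ lies in the relative interior of some face (covering), and that distinct faces have disjoint relative interiors (uniqueness). I will prove the latter by showing the stronger fact that whenever $u \in \mathrm{ri}(F)$ for a face $F$, then $F$ is the smallest face of $U$ containing $u$, and the former by constructing this smallest face explicitly as a canonical set $F_u$.

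For uniqueness, fix a face $F$ with $u \in \mathrm{ri}(F)$ and let $F'$ be any other face containing $u$. Given any $v \in F$, the definition of relative interior supplies $\lambda > 0$ with $w := u + \lambda(u - v) \in F \subseteq U$, so
\[
u = \tfrac{\lambda}{1+\lambda}\, v + \tfrac{1}{1+\lambda}\, w
\]
is a strict convex combination of $v, w \in U$. The face property of $F'$ then forces $v \in F'$, so $F \subseteq F'$. Applying this symmetrically rules out any two faces sharing a relative-interior point.

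For covering, given $u \in U$, I propose
\[
F_u := \{\, v \in U \mid \exists\, \lambda > 0,\ u + \lambda(u - v) \in U \,\},
\]
which trivially contains $u$. Three properties must be checked. (a) Convexity: any $\lambda' \in (0,\lambda]$ is also a valid witness for $v$ because $u + \lambda'(u - v) = \tfrac{\lambda'}{\lambda}(u + \lambda(u-v)) + (1 - \tfrac{\lambda'}{\lambda})u$, so two witnesses can be equalized to a common $\lambda$, after which averaging shows $\mu v_1 + (1-\mu)v_2 \in F_u$. (b) Face property: if $x = \alpha y + (1-\alpha)z \in F_u$ with witness $w \in U$, then $u$ is a convex combination of $y, z, w$, and choosing $\lambda' = \alpha/(1 + \lambda - \alpha) > 0$ makes the coefficient of $y$ in $u + \lambda'(u - y)$ vanish, so this expression is a convex combination of $z$ and $w$, hence lies in $U$; this puts $y \in F_u$, and symmetrically $z \in F_u$. (c) Relative-interior property for $u$, treated below.

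The main obstacle is step (c): the definition of $F_u$ only requires the extension $w = u + \lambda(u - v)$ to lie in $U$, whereas verifying $u \in \mathrm{ri}(F_u)$ demands $w \in F_u$ itself. The resolution is to observe that extension witnesses are reciprocal: $u + (1/\lambda)(u - w) = v \in U$, with $1/\lambda > 0$, so $v$ itself witnesses $w \in F_u$. This symmetric reciprocation is the geometric heart of the statement---it is precisely why $u$ sits in the relative interior of $F_u$ and not on its relative boundary---and it, together with the partition consequence drawn from uniqueness, completes the proof.
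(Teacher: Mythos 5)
The paper offers no proof of this lemma: it is imported verbatim as Corollary 11.11(a) of \cite{soltan2015lectures}, so there is no in-paper argument to compare against. Your self-contained proof takes the standard route---essentially the one in that reference---of building, for each $u \in U$, the minimal face $F_u = \{v \in U \mid \exists\,\lambda>0,\ u+\lambda(u-v)\in U\}$, verifying that it is a face with $u \in \mathrm{ri}(F_u)$, and pairing this with the observation that any face is contained in every face that meets its relative interior. The logic is sound throughout: the reciprocity observation in step (c) is exactly the right resolution of the one delicate point (that the extension witness $w$ lies in $F_u$ and not merely in $U$), and the uniqueness half is clean. Two small blemishes. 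First, the explicit witness in step (b) is off: the coefficient of $y$ in the representation of $u$ as a convex combination of $y,z,w$ is $c_y=\lambda\alpha/(1+\lambda)$, so killing it requires $\lambda' = c_y/(1-c_y) = \lambda\alpha/\bigl(1+\lambda(1-\alpha)\bigr)$ rather than $\alpha/(1+\lambda-\alpha)$; you dropped a factor of $\lambda$. Since all that matters is the existence of some positive $\lambda'$ for which the residual expression is a convex combination of $z$ and $w$, this does not affect validity, but the displayed value only works when $\lambda=1$. Second, to conclude that the collection is literally a partition you implicitly need every face to have a nonempty relative interior---your covering argument only exhibits the faces of the form $F_u$---which is a standard fact for nonempty convex subsets of $\R^n$ but deserves a sentence, especially given that the paper uses the algebraic definition of $\mathrm{ri}$.
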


The next lemma shows that Pareto optimal points ``come in faces.'' It is standard in the convex analytic literature to refer to Pareto optimal points as \emph{maximal} points, so we use that language here.

\begin{lemma}\label{lemma:maximal-face}
Suppose a maximal point $u$  of   a closed convex set $U$ lies in the relative interior of a face $F$ of $U$.  Then, every point in $F$ is maximal. \end{lemma}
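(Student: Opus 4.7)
The plan is a short proof by contradiction that leverages the defining property of the relative interior of $F$ together with convexity of $U$. Suppose $v \in F$ is not maximal, so that there exists $w \in U$ with $w > v$. The goal is to manufacture from $w$ a point in $U$ that strictly dominates $u$, contradicting the maximality of $u$.

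To do this, I would use the relative interior assumption to ``look past'' $u$ from $v$ inside $F$. Specifically, since $u \in \mathrm{ri}(F)$ and $v \in F$, the definition of relative interior gives a $\lambda > 0$ such that $u' := u + \lambda (u - v) \in F \subseteq U$. Rearranging this expresses $u$ as the convex combination
\begin{equation*}
u = \tfrac{1}{1+\lambda}\, u' + \tfrac{\lambda}{1+\lambda}\, v.
\end{equation*}

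Now replace $v$ with its dominator $w$ in this convex combination and set
\begin{equation*}
\tilde u := \tfrac{1}{1+\lambda}\, u' + \tfrac{\lambda}{1+\lambda}\, w.
\end{equation*}
Since $u', w \in U$ and $U$ is convex, $\tilde u \in U$. Moreover,
\begin{equation*}
\tilde u - u = \tfrac{\lambda}{1+\lambda}\,(w - v),
\end{equation*}
which is nonnegative componentwise and nonzero because $w > v$ and $\lambda > 0$. Hence $\tilde u > u$, contradicting the maximality of $u$. Therefore every $v \in F$ must be maximal.

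There is no real obstacle here; the only subtlety is making sure to extend the segment on the correct side—namely, from $v$ through $u$ to a point $u'$ still in $F$—so that $v$ (rather than $u$) ends up as the convex combination coefficient that gets ``upgraded'' to $w$. This is exactly what the relative interior hypothesis on $u$ delivers.
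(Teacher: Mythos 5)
Your proof is correct and follows essentially the same route as the paper's: extend the segment from the nonmaximal point through $u \in \mathrm{ri}(F)$ to a point $u'$ of $F$, write $u$ as a convex combination of $u'$ and the nonmaximal point, and swap in the dominating point to produce an element of $U$ strictly dominating $u$. The only cosmetic difference is that the paper separately dispenses with the singleton-face case first, which your argument handles without special treatment.
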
 
\begin{proof}
The stated result  is immediate in the case  $F$ is a singleton, so we may assume that $F$ is not a singleton. Suppose for contradiction that  $F$ contains  a nonmaximal element $u'$. Thus, there exists a $v \in U$ such that $v > u'$.  Since $u \in  \mathrm{ri} (F)$, there exists  $\lambda > 0$ such that $w' = u + \lambda (u -u') \in F$.  Now let $z= \alpha  w' + (1- \alpha) v$, where $\alpha  =\frac{1}{1+\lambda}$ or $\alpha (1+\lambda) =1$. Note that $z \in U$ since $U$ is convex. Moreover, 
\begin{align*}
z  =  \alpha \left(u + \lambda (u -u') \right)  + (1-\alpha) v  \nonumber 
     = u - \alpha \lambda u'  + (1-\alpha) v     = u + (1-\alpha) (v -u') > u, 
\end{align*}
contradicting the maximality of $u$.
\end{proof}

According to \Cref{lemma:maximal-face}, we say a face is \emph{maximal} if all of its elements are maximal.  
Importantly for our purpose, \cref{lemma:rockafellar-union,lemma:maximal-face} imply that every maximal point of $U$ belongs to a relative interior of a unique maximal face of $U$ (possibly $U$ itself).   
 
The next result provides a key step of our argument: every face, possibly non-exposed, is \emph{eventually} exposed.%
%
%
\footnote{Theorem 5 of \cite{lopomo2019} proves the same result for singleton faces $F$, i.e., extreme points.} 
%
%
 
\begin{lemma}[Theorem 12.7 in \cite{soltan2015lectures}] \label{fact-5}
Let $U \subset \mathbb{R}^n$ be a convex set and $F$ be a nonempty proper face of $U$.  There is a sequence of convex sets $(G^t)_{t=0}^T$ such that
\begin{align*}
F  =G^T \subset G^{T-1} \subset \cdots \subset G^1 \subset G^0 =U,
\end{align*} 
where $G^t$ is a nonempty proper exposed face of $G^{t-1}$ for each $t =1, \ldots, T$.
\end{lemma}

This lemma is already illustrated in the Introduction. In \cref{fig:illustration-of-sequential-optimization}, the singleton face $u$ is exposed in two rounds: the vertical segment is exposed first by a weight vector $(1,0)$, and then $u$ is exposed by weight vector $(1,1)$  (among many others) out of that vertical segment. This lemma is not enough for our result, however, as it is silent about any additional properties on the weight vectors that expose the sequence of faces. Crucially,  our characterization requires the weight vectors to be nonnegative and eventually positive.  

For these additional features, we need to introduce a set of analytical tools.  Let $J$ be any subset of the index set $I$ and let $\chi^J$ denote the vector whose $i$-th coordinate is equal to $1$ for every $i \in J$  and equal to $0$ for every $i \notin J$. When $J$ is the singleton $\{i\}$ we simplify $\chi^{\{i\}}$ to $\chi^i$. A convex set $U$ is \emph{downward closed in coordinates $J \subset I$  }  if,   for all $u \in U$ and all $\tau \ge 0$,    $u - \tau \chi^K \in U$ for any subset $K$ of $J$. A convex set that is downward closed in all coordinates $I$ is simply called \emph{downward closed}.  The \emph{downward closure} of a closed convex set   $U$   is the downward closed set $\dc(U) := \bigcup_{u \in U} (u - \R^n_+)$. It is straightforward to see that  $\text{dc}(U)$ is closed and convex if $U$ is closed and convex. 

One useful feature of downward closure is that it preserves maximal elements and thus maximal faces.

\begin{lemma}\label{lemma:downward-closure-suffices}
The set of maximal elements of a closed convex set coincides with that of its downward closure.
If $F$ is a maximal face of $U$ then $F$ is a maximal face of $\mathrm{dc} (U)$. 
\end{lemma}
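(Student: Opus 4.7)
The plan is to derive both claims directly from the definition $\dc(U) = \bigcup_{v \in U}(v - \R^n_+)$ together with the componentwise characterization of maximality.

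For the first claim, I would check the two inclusions of maximal sets separately. If $u \in U$ is maximal in $U$ but some $w \in \dc(U)$ satisfies $w > u$, write $w = v - r$ with $v \in U$ and $r \ge 0$; then $v \ge w > u$ produces a dominator $v > u$ already lying in $U$, contradicting maximality. Conversely, any maximal point of $\dc(U)$ must in fact lie in $U$, since if $u = v - r$ with $v \in U$ and $r \ne 0$ then $v > u$ inside $\dc(U)$ would contradict maximality; once $u \in U$, maximality in $U$ follows from $U \subseteq \dc(U)$.

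For the second claim, given a maximal face $F$ of $U$, I want to verify (a) $F$ is a face of $\dc(U)$ and (b) every point of $F$ is maximal in $\dc(U)$. Assertion (b) is immediate from the first claim, since every point of $F$ is maximal in $U$ by hypothesis. For (a), take $u \in F$ written as a proper convex combination $\alpha v + (1-\alpha) w$ with $v, w \in \dc(U)$ and $0 < \alpha < 1$, and decompose $v = v' - r_v$, $w = w' - r_w$ with $v', w' \in U$ and $r_v, r_w \ge 0$. By convexity of $U$ the lift $u' := \alpha v' + (1-\alpha) w'$ lies in $U$, and $u = u' - s$ with $s := \alpha r_v + (1-\alpha) r_w \ge 0$. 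Since $u \in F$ is maximal in $U$ and $u' \ge u$, we must have $u' = u$, so $s = 0$; strict positivity of $\alpha$ and $1-\alpha$ then forces $r_v = r_w = 0$, so $v = v'$ and $w = w'$ both belong to $U$. The face property of $F$ inside $U$ then yields $v, w \in F$.

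The only step requiring any care is this decomposition-and-recombination argument: one has to lift $v, w \in \dc(U)$ back to points of $U$, exploit convexity so that their convex combination remains in $U$, and then invoke maximality of $u$ in $U$ to annihilate the slack vectors $r_v$ and $r_w$. Once the slack is gone, the face property of $F$ inside $U$ transfers verbatim to $\dc(U)$, and both halves of the lemma are in hand.
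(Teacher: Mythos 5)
Your proof is correct. The first claim is handled essentially as in the paper: both arguments show that a dominator in $\dc(U)$ can be lifted to a dominator in $U$, and that no point of $\dc(U)\setminus U$ can be maximal. For the second claim, however, you take a genuinely different route to verifying that $F$ is a face of $\dc(U)$. The paper first proves that both endpoints $v,w$ of the convex combination are themselves \emph{maximal} in $\dc(U)$ --- by perturbing one endpoint upward and observing that the combination would then dominate $u$ --- and only then invokes the first claim to conclude $v,w\in U$. You instead exploit the additive structure $\dc(U)=U-\R^n_+$ directly: you lift $v,w$ to $v',w'\in U$, use convexity to form $u'=\alpha v'+(1-\alpha)w'\in U$ with $u'\ge u$, and let maximality of the single point $u$ in $U$ annihilate the slack $s=\alpha r_v+(1-\alpha)r_w$, which forces $r_v=r_w=0$ because $\alpha\in(0,1)$. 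Your version is more computational and only needs maximality of $u$ itself (not of the endpoints), but it leans on the explicit decomposition of the downward closure; the paper's version is slightly more structural, establishing along the way that the endpoints of any proper convex combination equal to a maximal point are themselves maximal, and would adapt more readily to other "monotone hulls" where an explicit slack decomposition is less convenient. Both arguments are complete and of comparable length.
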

\begin{proof}
Let $U$ be a closed convex set and $\dc(U)$ its downward closure. Let $u$ be a maximal element of $\dc(U)$; that is, $(u + \R^n_+) \cap \dc(U) = \{u\}$. If $u \in U$ then this implies $(u + \R^n_+) \cap U = \{u\}$ since $U \subset \dc(U)$ and so $u$ is a maximal element of $U$. Note that if $u \in \text{dc}(U) \setminus U$ then it cannot be maximal. Indeed, this implies that $u = v - w$ for some $v \in U$ and nonzero $w \in \R^n_+$ and so $v > u$ and so $u$ is not maximal. 

Conversely, we prove the contrapositive. Suppose $u \in \dc(U)$ is not a maximal element. This implies that there exists a $w \ne u$ with $w \in \dc(U)$ and $w \ge u$. However, then we can find a $v \ge w \ge u$ and $v \ne u$ and $v \in U$. This implies that $u$ is not a maximal element of $U$. 
We next prove the second statement. To see that $F$ is a face of $\mathrm{dc} (U)$, consider any $x,y \in \mathrm{dc} (U)$ and $\lambda \in (0,1)$ such that  $z= \lambda x   + (1-\lambda) y \in F$.  We need to show that  both $x$ and $y$ belong to $F$. We first show that $x$ and $y$ are both maximal. Suppose for contradiction that $x$ is not maximal. Then, we must have some $x' \in \mathrm{dc} (U) $ such that   $x' > x$. Let  $z' = \lambda x' + (1-\lambda) y $ and observe that $z' \in \mathrm{dc} (U)$,   $z' \ge z $, and $z' \ne z$, which contradicts the maximality of $z$.  Given that $x$ and $y$ are both maximal, we must have $x,y \in U$ since there is no maximal point in $\mathrm{dc} (U) \backslash U $. That $F$ is a face of $U$  then implies $x,y \in F$ as desired. 
\end{proof}

Crucially for our arguments, halfspaces of the form $\{ u : \langle \phi, u \rangle \le W_\phi\}$  that contain  downward-closed sets must have nonnegative weight vectors.

\begin{lemma}\label{lemma:nonnegative-supporting-hyperplanes}
Let $U$ be a set that is downward closed in coordinates $J \subset I$.
 If $U$ is contained in the halfspace $\{ u \in \R^n : \langle \phi, u \rangle \le W_\phi\}$, then  $\phi_j \ge 0,\forall j \in J$. 
\end{lemma}
\begin{proof}
Suppose for contradiction that $\phi_j<0$ for some $j \in J$. Let $v$ be an arbitrary element of $U$. Since $U$ is downward closed in coordinates $J$, we also have $v - \lambda \chi^j \in U$ for any $\lambda \ge 0$, where $\chi^j$ is the unit vector with $1$ in component $j$. However, observe that $\langle \phi, v - \lambda \chi^j \rangle = \langle \phi, v \rangle - \lambda \langle \phi, \chi^j \rangle = \langle \phi, v \rangle - \lambda \phi_j$. But  $\langle \phi, v \rangle - \lambda \phi_j \to \infty$ as $\lambda \to \infty$ since $\phi_j<0$. This contradicts the fact that is contained $\{ u \in \R^n : \langle \phi, u \rangle \le W_\phi\}$.
\end{proof}

\begin{lemma}\label{lemma:exposed-faces-of-downward-closure-are-partially-downward-closed}
Let $F$ be a  face of a closed convex set $U$   that is downward closed in coordinates $J \subset I$.  If $\phi$  exposes $F$ out of $U$, then  $F$ is downward closed in coordinates  $J \setminus \supp \phi$. 
\end{lemma}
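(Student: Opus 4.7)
The plan is a direct verification. Fix $u \in F$, $\tau \ge 0$, and a subset $K \subset J \setminus \supp \phi$. I must show that $u - \tau \chi^K \in F$. I would split the task into two claims: (i) the point $u - \tau \chi^K$ still lies in $U$, and (ii) it still achieves the maximum of $\langle \phi, \cdot \rangle$ over $U$, so it belongs to $F$.

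For (i), since $K \subset J$ and $U$ is downward closed in coordinates $J$ by hypothesis, the definition of downward-closedness in coordinates $J$ gives $u - \tau \chi^K \in U$ immediately.

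For (ii), I would compute
\begin{equation*}
\langle \phi, u - \tau \chi^K \rangle = \langle \phi, u \rangle - \tau \sum_{i \in K} \phi_i.
\end{equation*}
Because $K \cap \supp \phi = \emptyset$, every $i \in K$ satisfies $\phi_i = 0$, so the sum vanishes and $\langle \phi, u - \tau \chi^K \rangle = \langle \phi, u \rangle$. Since $u \in F = \arg\max_{u' \in U} \langle \phi, u' \rangle$ and $u - \tau \chi^K \in U$ with the same inner-product value, we conclude $u - \tau \chi^K \in F$, as desired.

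I do not foresee an obstacle; the content of the lemma is essentially tautological once one unwinds the definitions, and in particular it does not require invoking \cref{lemma:nonnegative-supporting-hyperplanes} (although that result would a posteriori guarantee $\phi_j \ge 0$ for $j \in J$). The only point worth flagging in the writeup is that, by convention, $K$ ranges over arbitrary subsets of $J \setminus \supp \phi$ (including $K = \emptyset$, which trivially gives $u \in F$), so the proof above covers every case required by the definition of downward-closedness in those coordinates.
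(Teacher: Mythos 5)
Your proof is correct and follows essentially the same route as the paper's: use downward-closedness of $U$ in coordinates $J$ to keep the perturbed point in $U$, then observe that $\langle \phi, \cdot\rangle$ is unchanged because $\phi$ vanishes on $K$, so the point remains in the exposed face. The only (cosmetic) difference is that you verify the condition for an arbitrary subset $K \subset J \setminus \supp\phi$ at once, whereas the paper perturbs one coordinate at a time; your version matches the definition slightly more directly.
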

\begin{proof}
Take any $j \in K := J \setminus \supp \phi$ and set $u ' = u - \epsilon \chi^j$ for some $u \in F$ and $\epsilon > 0$. Since $U$ is downward closed in coordinates $J$ and $j \in J$, we have $u' \in U$. Moreover, $\langle \phi, u' \rangle = \langle \phi, u - \epsilon \chi^j \rangle = \langle \phi, u \rangle - \epsilon \langle \phi, \chi^j \rangle = \langle \phi, u \rangle - \epsilon \phi_j = \langle \phi, u \rangle$ since $\phi_j = 0$ when $j \in K$ since no element of $K$ lies in $\supp \phi$. However, then $u' \in F$ since $\langle \phi, u' \rangle = \langle \phi, u \rangle = \max_{v \in U} \langle \phi, v \rangle$ and $F = \arg\max_{v \in U} \langle \phi, v \rangle$ since $F$ is exposed by $\phi$. 
\end{proof}

\subsubsection{Proof of (i) $\Rightarrow$ (ii)}\label{s:proof}

Fix any maximal point $u$ of $U$.  We wish to show that $u$ sequentially maximizes utilitarian welfare  over $U$.     The proof consists of several steps.

\begin{step} \label{step1}  There exists a unique face $F$ of $\text{dc}(U)$ such that $u\in \mathrm{ri}(F)$.  All points of $F$ are maximal in $\text{dc}(U)$. 
\end{step}

\begin{proof}  By \cref{lemma:downward-closure-suffices},  $u$ is a maximal point  of $\mathrm{dc}(U)$. By \cref{lemma:rockafellar-union} there is a unique face $F$ of $\mathrm{dc}(U)$ which  contains $u$ in $\mathrm{ri}(F)$. By \cref{lemma:maximal-face}, every point of $F$ is maximal in $dc(U)$, as desired.
\end{proof}

\begin{step}  \label{step2} The face $F$ (containing $u$) is a proper face of $\text{dc}(U)$.  
\end{step}

\begin{proof}  If not, we must have $F=\text{dc}(U)$.  Pick any $u'\in  \text{dc}(U)$.  Then, for any $\epsilon>0,$ $u''  = u' - \epsilon \chi^I$  is also in $\text{dc}(U)$ by the downward closure property. Clearly,  $u''$ is not a maximal point of $\text{dc}(U)$ and cannot belong to $F$ by \cref{step1}, a contradiction. 
\end{proof}

 \begin{step}  \label{step3} There exists  a sequence of convex sets $(G^t)_{t=0}^T$ of $\text{dc}(U)$ such that $G^t$ is a proper exposed face of $G^{t-1}$ for  $t = 1, \ldots, T$, where $G^0 = \mathrm{dc}(U)$, $G^T = F$, and $ T\le n$.
 \end{step}
 
\begin{proof} Since $F$ is a  proper face of  $\text{dc}(U)$ by \cref{step2}, the result follows from   \cref{fact-5}. For any set $V$, let $\text{dim} (V)$ denote its dimension.%
%
%
\footnote{The dimension $\mathrm{dim}(V)$ of a convex subset $V$ of $U$, including one of $U$'s faces, is defined by the dimension of its affine hull: $\mbox{aff}(V):= \{\sum_{j=1}^k\alpha_j v^j \mid k\in \mathbb{N},  v^j\in V, \alpha_j\in \mathbb{R}, \sum_{j=1}^{k} \alpha_j=1 \}$.}
%
%
If $V'$ is a proper face of convex set $V$, then   $\text{dim}(V') < \text{dim} (V)$ by Theorem 11.4 in \cite{soltan2015lectures}. Thus,     we have $ T\le n$ since $\text{dim}(G^{t})<\text{dim}(G^{t-1})$ and since $\text{dim}(G^0) = \text{dim}(\mathrm{dc}(U))=n$.
 \end{proof}

 \begin{step}  \label{step4} There exists  a sequence  $\Phi=(\phi^1, \dots,
 	\phi^T)$ such  that  for each $t=1,\dots,T$,
\begin{equation*}
G^{t}=\arg\max_{x\in G^{t-1}} \langle \phi^t, x\rangle,
\end{equation*}
where $\phi^t >0$ and $ \supp \phi^1 \subset \supp \phi^{2} \subset \ldots \subset \supp \phi^T = I $.\footnote{Note that $\Phi$, with these properties, is eventually positive since $\phi^T \gg 0$.}
\end{step}

\begin{proof}  By \cref{step3}, there exists  a sequence of weight vectors $\Psi=(\psi^1, \dots, \psi^T)$ such that, for each $t=1,\dots,T$,  $\psi^t$ exposes $G^t$ out of $G^{t-1}$.  We construct  $\Phi= (\phi^1, \dots, \phi^T)$ with the stated properties.  
	
The construction is recursive.  First, since $G^0 = \mathrm{dc}(U)$, by  \cref{lemma:nonnegative-supporting-hyperplanes},  $\phi^1:= \psi^1$ is nonnegative. For an inductive hypothesis, suppose that there are $\phi^k$, $k=1,\dots,t-1$, with the stated properties and that for each $k = 1,\dots,t-1$,  $G^{k}$ is downward-closed in  coordinates  $J^{k} := \{i \in I \mid \phi^{k}_i=0\}=I\setminus \supp \phi^{k}$. Note that $J^{t-1}\subset J^{t-2}\subset \cdots \subset J^{0}:= I$.  We will now construct $\phi^t$  and show that $G^t$ is downward-closed in coordinates $J^t = \{i \in I \mid \phi^{t}_i=0\}$. 
	
First, observe   $G^{t-1}$ is contained in $\{ u :  \langle   \psi^t, u \rangle \le \max_{ u' \in G^{t-1} }  \langle   \psi^t, u' \rangle  \}$ and  $G^{t-1}$ is downward-closed in coordinates $J^{t-1}$.  Hence,   \cref{lemma:nonnegative-supporting-hyperplanes} implies that  $\psi^t_j\ge 0$ on coordinates $j\in J^{t-1}$.  Consider next  $i\in \supp \phi^{t-1}=I\setminus J^{t-1}$.  For such $i$, it is indeed  possible for $\psi^t_i$ to be negative. But   noting $\phi^{t-1}_i>0$ for such $i$, we define
\begin{equation*}
   \phi^t = \lambda^t  \phi^{t-1}+ \psi^t,
\end{equation*}
 where $\lambda^t >  \max_{i \in \supp \phi^{t-1}}  |\psi^{t}_i |/\phi^{t-1}_i$ is a (sufficiently large) positive scalar.  Given this construction, $\phi^t_i\ge 0$ for all $i\in I$ and  $\phi^t_i> 0$ for all $i\in \supp \phi^{t-1}$; i.e., $\supp \phi^t \supset \supp \phi^{t-1}$.
	 	
	 Let us  show that $\phi^t$ exposes $G^t$ out of $G^{t-1}$. To this end, let $M^t:=  \max_{x\in G^{t-2}} \langle \phi^{t-1}, x\rangle$.   For all $x\in G^{t-1}$, we have
\begin{equation*}
\langle \phi^t , x \rangle   =  \lambda^t \langle  \phi^{t-1} , x\rangle  + \langle \psi^t, x \rangle  =\lambda^t M^t   + \langle \psi^{t}, x \rangle,
\end{equation*}
	 since $\langle \phi^{t-1}, x\rangle =M^t$ for all $x\in G^{t-1}$.  Henceforth, 
\begin{equation*}
\arg\max_{x\in G^{t-1}} \langle \phi^t , x \rangle = \arg\max_{x\in G^{t-1}}  \langle \psi^{t}, x \rangle= G^{t}.
\end{equation*}
	Since $G^{t-1}$ is downward-closed in coordinates $J^{t-1}$ and $\phi^t$ exposes $G^t$ out of $G^{t-1}$, \cref{lemma:exposed-faces-of-downward-closure-are-partially-downward-closed} implies that $G^t$ is downward-closed in coordinates $J^{t-1} \setminus \supp  \phi^t =  (I \setminus \supp \phi^{t-1}) \setminus \supp  \phi^t = I\setminus \supp  \phi^t = J^t $, where the penultimate equality holds since $\supp \phi^{t-1} \subset \supp \phi^t$.

It remains to show that for each $i \in I$, there exists $t \in \{1,\dots,T\}$ such that $\phi^t_i>0$. To show this, it suffices to show that $\phi^T \gg 0$. Supposing not, there must be  some $i \in I$ such that   $\phi^t_i=0$ for all $t =1,\dots,T$, so $i \in J^t$ for all $t =1, \dots,T$.    Then,  \cref{lemma:exposed-faces-of-downward-closure-are-partially-downward-closed} implies that for all $t =1,\dots,T$,  $G^t$ is downward-closed in coordinate $i$, which  contradicts the fact that $G^T = F$ is maximal.
\end{proof}

We have so far shown that $u$ sequentially maximizes welfare   over $\text{dc}(U)$.  We now prove the main result: $u$ sequentially   maximizes welfare  over $U$.  To this end, the following last step suffices. 

\begin{step}\label{step5} $u$ sequentially maximizes utilitarian welfare  over $U$.
\end{step}

\begin{proof} Recall a sequence of weight vectors $\Phi$ from \cref{step4}.  Let  $U^0, U^1,\ldots, U^T$ be   convex subsets of $U$ such that, for each $t=1,\dots,T$, $U^t$ is the face of $U^{t-1}$ exposed by weight vector $\phi^t$; i.e.,
\begin{equation*}
U^t =\arg\max_{x\in U^{t-1}}\langle \phi^t, x\rangle,
\end{equation*}
	where $U^0:= U$.  It suffices to prove that $U^T=F$, as this will prove that $u$ sequentially maximizes utilitarian welfare  over $U$.   
	
	To this end, it suffices to prove that $F \subset U^t \subset G^t$ for each   $t = 0,\ldots,T$.  We proceed inductively for the proof. First, note that the claim is trivially true for $t=0$ because $U^0 := U \subset dc(U) := G^0$  and $F \subset U = U^0$ by definition. Now, suppose that the claim holds for $t$. We show (i)  $F \subset U^{t+1}$  and (ii) $U^{t+1} \subset G^{t+1}$ as follows.
	
	For (i), fix any point $v$ in $F$. Then, since $F \subset G^{t+1}$ and $\phi^{t+1}$ exposes $G^{t+1}$ out of $G^t$, we have $\langle \phi^{t+1},v \rangle \ge \langle \phi^{t+1},w \rangle$ for every $w \in G^t$. Because $U^t \subset G^t$ by the inductive assumption, 
	\begin{equation}\label{eq:maximal-here}
	\langle \phi^{t+1},v \rangle  \ge \langle \phi^{t+1},w \rangle
	\end{equation}
	for every $w \in U^t$. Moreover, $v \in U^t$ by the assumption that $F \subset U^t$. This fact, combined with \cref{eq:maximal-here}, implies that $\phi^{t+1}$ is maximized by $v$ over $U^t$ and so $v \in U^{t+1}$, since $\phi^{t+1}$ exposes $U^{t+1}$ out of $U^t$. This holds for every $v \in F$ and so $F \subset U^{t+1}$, implying (i) holds for $t+1$.
	
	As for (ii), fix any point $v$ in $U^{t+1}$. By (i), we know that 
	\begin{equation}\label{eq:first-part}
	\langle \phi^{t+1},v \rangle = \langle \phi^{t+1},w \rangle
	\end{equation}
	for any $w \in F$, since $U^{t+1}$ is exposed by $\phi^{t+1}$ and $F$ is a subset of $U^{t+1}$.  Moreover, by the definition of $G^{t+1}$ and the fact that $F \subset G^{t+1}$ by construction, we know that 
	\begin{equation}\label{eq:second-part}
	\langle \phi^{t+1},w \rangle  \ge \langle \phi^{t+1},z \rangle
	\end{equation}
	for any $w \in F$ and $z \in G^t$. Combining \cref{eq:first-part} and \cref{eq:second-part} implies that $\langle \phi^{t+1},v \rangle  \ge \langle \phi^{t+1},z \rangle$  for any $z \in G^t$. This, and the fact that $v \in G^t$ (which immediately follows from $v \in U^{t+1} \subset U^t \subset G^t$), means that $v \in G^{t+1}$. Since this holds for any $v \in U^{t+1}$, we can conclude that $U^{t+1} \subset G^{t+1}$, so (ii) holds for $t+1$. 
	
	This completes the induction and establishes the result. \end{proof}
	
\section{Appendix: Proof of \cref{social-preference-theorem}}\label{s:proof-of-social-preference-theorem}

\subsection{Proof of (i) $\Rightarrow$ (ii)} 
We first show that if $R^*$ satisfies the \textsf{Pareto Principle} and \textsf{Invariance}, then there exists a sequence $\Phi  = (\phi^1, \phi^2, \ldots, \phi^T)$  of nonnegative and  eventually positive weight vectors such that for any $u, v \in \mathbb{R}^n$,  $u P^* v$ if $u$ sequentially utilitarian welfare dominates $v$, that is,    \begin{align}
    \label{i-to-ii-1}  \langle  \phi^t, u \rangle  >  \langle \phi^t , v \rangle \mbox{ for some $t$} \, \mbox{ and } \,  \langle  \phi^s, u \rangle  = \langle \phi^s , v \rangle \mbox{ for all } s < t.  
\end{align} 
We will then show if $R^*$ satisfies \textsf{Weak Continuity} in addition, then \cref{i-to-ii-1} implies $u P^* v$. Thus,   $u P^* v$  if and only if  $u$ sequentially utilitarian welfare dominates $v$.   

To do so,  define $S :=\{ s \in \mathbb{R}^n :  0 R^* s \}$ and 
 $Q := \{ s + p :   s \in S,   p \in \mathbb{R}^n, \mbox{ and } p \ll 0 \}. $   We now observe that 
$Q$ is convex.  To this end, let $q, q'\in Q$ and $q'' =t q+(1-t)q'$ for any $t\in (0,1)$.   Then, $q=s+p$ and $q'=s'+p'$ for some $s,s'\in S$ and $p,p'\ll 0$, and  $q''=t s+ (1-t) s'+ t p+(1-t)p'$. Then, by \textsf{Invariance}, we have $ 0 R^*  t s$,  $0  R^* (1-t)s'$,  and   $  (1-t) s'  R^*   t s+(1-t)s'$. Thus, by transitivity,  we have  $0 R^* ts +(1-t)s' $, i.e., $ts +(1-t)s' \in S$. Since  $ t p+(1-t)p'\ll 0$, we have $q'' \in Q$.

 Letting $\bar Q$ denote the closure of $Q$,   $\bar Q$ is also convex. Note  that $0 \in  S \subset \bar  Q$ and that by the \textsf{Pareto Principle}, $0$ is a maximal point of both $S$ and $\bar Q$.  Also, there is a maximal face $F \subset S$ with $0 \in F$. Letting $G^0 := \bar Q$,  the  same proof as    Step 4 in the proof of \cref{theorem:characterize-PO} can be used to show  there exists  a sequence   $\Phi=(\phi^1, \dots,
 	\phi^T)$ such  that  for each $t=1,\dots,T$,
\begin{equation}
G^{t}=\arg\max_{x\in G^{t-1}} \langle \phi^t, x\rangle, \label{eq:seqential-max}
\end{equation}
where $\phi^t >0$, $\supp \phi^t \supsetneq \supp \phi^{t-1}$, $\phi^T \gg 0$, and $G^T = F$.

\begin{claim}  \label{claim-1}
    \cref{i-to-ii-1} implies $u P^* v$
\end{claim}  
\begin{proof}
    Suppose for contradiction that there are some $u,v$ for which \cref{i-to-ii-1} holds   but  $v R^* u$.   Let $w := u -v$. Then, by \textsf{Invariance},    $ 0 R^*w  $, so $w \in S \subset \bar Q =G^0$. By the hypothesis, we have  $\langle  \phi^s, w \rangle =0,\forall s < t$.  Since $ 0 \in F  = G^T$, \cref{eq:seqential-max} implies  $w \in G^s,\forall s < t$, which in turn implies $\langle  \phi^t, w \rangle  \le   \langle \phi^t ,  0 \rangle =0$, or $ \langle  \phi^t, u \rangle  \le   \langle \phi^t , v \rangle$. We thus have a contradiction.
\end{proof} 

\begin{claim}
  $u P^* v$ implies \cref{i-to-ii-1}    
\end{claim}
\begin{proof} Let us  first prove that  $\langle \phi^s, u \rangle =\langle \phi^s, v \rangle, \forall s$ implies   $u I^* v$.  Suppose for a contradiction that  $u P^* v$.  By \textsf{Weak Continuity},  there are $i$ and $\delta >0$ such that   $ u' P^* v$ for all $u' \in B_{\delta}^i (u)$. We  can then find a round $t$ in which  $i \in  \supp   \phi^t  \backslash \supp \phi^{t-1}$. Since $\langle  \phi^s ,  \chi^i \rangle  =0,\forall s< t$ and $\langle \phi^t, \chi^i  \rangle >0$ for the unit vector $\chi^i$ whose $i$-th component is equal to 1,
we have  $ \langle \phi^s, u- \delta' \chi^i \rangle = \langle \phi^s, u \rangle =\langle \phi^s, v \rangle, \forall s < t$ and $\langle \phi^t, u-\delta' \chi^i \rangle < \langle \phi^t, u \rangle = \langle \phi^t, v \rangle$ for any $\delta' >0$, which implies by the former statement that $v P^* (u-\delta' \chi^i) $, contradicting that $u' P^* v $ for all $ u' \in B_{\delta}^i (u)$.

Since  $\langle \phi^s, u \rangle =\langle \phi^s, u \rangle, \forall s$ implies $u I^* v$,    $u P^* v$  implies that  there must be some $t$ such that $\langle  \phi^s, u \rangle =  \langle \phi^s, v \rangle,\forall s < t $ and $\langle  \phi^t, u \rangle  \ne   \langle \phi^t, v \rangle $. Since $\langle  \phi^t, u \rangle   <   \langle \phi^t, v \rangle $ would imply  $ v  P^* u$  by  \cref{claim-1},  we must have $\langle  \phi^t, u \rangle  >    \langle \phi^t, v \rangle $ as desired.  \end{proof}

\subsection{Proof of  (ii) $\Rightarrow$ (iii)}  Consider the sequence $(\phi^1,\phi^2, \ldots,\phi^T)$ in  (ii).  Then, $u P^* v$ is equivalent to \cref{i-to-ii-1}, which  implies  
\begin{align}
\sum_{i\in I} \psi_i u_i  - \sum_{i\in I} \psi_i v_i = \epsilon^{t-1} \Big[  \langle \phi^t, u -v \rangle  + \sum_{s > t} \epsilon^{s-t} \langle \phi^{s},u- v \rangle \Big] >0,   \label{hyperreal-comp}
\end{align} where the inequality holds since the first term in the square bracket is a positive real  and the second term is infinitesimal. 
Conversely, if   $\sum_{i\in I} \psi_i u_i > \sum_{i\in I} \psi_i v_i$, then there must be some $s$ such that $ \langle \phi^s, u  \rangle  >  \langle \phi^s, v \rangle.  $ Letting $t$ be the smallest  such $s$, we must have $ \langle \phi^r, u  \rangle  =  \langle \phi^r, v \rangle, \forall r < t  $: else if $\langle \phi^r, u  \rangle   <   \langle \phi^r, v \rangle$ for some $r < t$, then one can use a similar argument to \cref{hyperreal-comp} to obtain $\sum_{i\in I} \psi_i v_i >  \sum_{i\in I} \psi_i u_i$, a contradiction.

\subsection{Proof of (iii) $\Rightarrow$ (i)}   That  the welfare function in \cref{hyperreal-welfare-function}---or a social welfare ordering it represents---satisfies the \textsf{Pareto Principle} and \textsf{Invariance}  is  straightforward to check. 
To check that it satisfies \textsf{Weak Continuity}, consider any $u P^* v$ so that $W (u) > W(v)$. As argued before, there must be  some $t$ such that $ \langle \phi^t, u  \rangle  >  \langle \phi^t, v \rangle$ and  $ \langle \phi^s, u  \rangle  =  \langle \phi^s, v \rangle, \forall s < t  $. Pick any $i \in  \supp   \phi^t  \backslash \supp \phi^{t-1}$. For sufficiently small $\delta >0$ and all $u' \in B^i_{\delta} (u)$, we have $ \langle \phi^t,  u' \rangle  >  \langle \phi^t, v \rangle$  while $\langle \phi^s, u' \rangle  =  \langle \phi^s, v \rangle,\forall s < t$. This implies as desired  that for all $u' \in B^i_{\delta} (u) $,
\begin{align*}
    W (u') - W (v)  = \epsilon^{t-1} \Big[  \langle \phi^t, u' -v \rangle  + \sum_{s > t} \epsilon^{s-t} \langle \phi^{s}, u' - v \rangle \Big] >0,   
\end{align*}  where the inequality holds for the same reason as \cref{hyperreal-comp} holds.

\bibliographystyle{economet}
\bibliography{bibliography-maximal}


\clearpage
\pagenumbering{arabic}
\renewcommand*{\thepage}{SA.\arabic{page}}

\begin{center}
    \huge{Supplementary Appendices for: \vskip 2pt ``Near'' weighted utilitarian characterizations of Pareto optima}
\end{center}

\section{Proofs for alternate characterization of Pareto optimality in \cref{s:discussion}}

\subsection{Proof of  \cref{prop:hyperreal}} \label{sec:hyperreal-proof}
	\emph{The ``only if'' direction.} Suppose $u \in U$ is Pareto optimal. Then, by \Cref{theorem:characterize-PO}, 
	there is a sequence $\Phi = (\phi^1, \phi^2, \dots, \phi^T)$ of  $T\le n$ nonnegative and eventually positive weight vectors  such that $u$ sequentially maximizes $\Phi$. Defining  $\psi:=\sum_{t \in \{1,\dots,T\}} \epsilon^{t} \phi^t$,    we have $\psi_i >0$ for each $i \in I$ since the vectors  $\Phi = (\phi^1, \phi^2, \dots, \phi^T)$ are  nonnegative and eventually positive. Also,  we have   $u \in \arg\max_{v \in U} \langle \psi, v \rangle$ by \cref{theorem:characterize-PO}. 
	\vskip 10pt
	\noindent \emph{The ``if'' direction.} To show the contrapositive, assume that $u$ is not Pareto optimal. Then there exists $v \in U$ such that $v> u$. Then, for any weight vector $\psi=(\psi_i)_{i \in I}$ with $\psi_i \in {^*\mathbb R}$ and $\psi_i>0$ for each $i \in I$, $\langle \psi,v \rangle-\langle \psi,u \rangle=\sum_{i \in I} \psi_i (v_i-u_i)>0$.  This means that  $u \not\in \argmax_{u' \in U} \langle \psi,u' \rangle$, as desired.

\subsection{Proof of \cref{hyperreal-swf-theorem}}\label{ss:proof-of-hyperreal-swf-theorem}

The \cref{item-2} $\Rightarrow$ \cref{item-1} direction is obvious. To prove  \cref{item-1} $\Rightarrow$ \cref{item-2},
we adopt the proof approach of Theorem 1' of an unpublished work by \citet{blume-unpublished} who studies an individual's decision under uncertainty.\footnote{\cite{blume-unpublished} is superseded by the published version,  \citet{blume1991lexicographic}, although our proof is more closely related to the former.} Suppose that the social welfare ordering $R^*$ satisfies the \textsf{Pareto Principle} and \textsf{Invariance}.

\begin{lemma}\label{blume-lemma8}
Let $U'=\{u^1,u^2,\dots,u^m\} \subset \mathbb R^n$ be a finite subset of utility profiles such that $u P^* \tilde u$ for some $u,\tilde u \in U'$. Then, there exists a nonnegative and non-zero weight vector $\phi^{U'} \in \mathbb R_+^n$   such that, for any $u, \tilde u \in U'$, $u R^* \tilde u$ if and only if $\langle \phi^{U'},u \rangle \ge \langle \phi^{U'},\tilde u \rangle$.  
\end{lemma}
\begin{proof}
We utilize the following fact:
\begin{lemma}[Lemma 7 of \citet{blume-unpublished}] \label{blume-lemma7}
Let $v^1,\dots, v^K$ and $w^{K+1},\dots,w^L$ be vectors in $\mathbb R^n$. Then, one of the following two statements holds.  
\begin{enumerate}
    \item \label{case-1} There exists $x \in \mathbb R_+^n \setminus \{0\}$ such that 
    \begin{align*}
        \langle x, v^k \rangle & >0 \text{ for all $k \in \{1,\dots,K\}$, and}\\
        \langle x, w^\ell \rangle & =0 \text{ for all $\ell \in \{K+1,\dots,L\}$}.
    \end{align*}
    \item \label{case-2}  There exist $y \in \mathbb R_+^K, z \in \mathbb R^{L-K}$ such that
    \begin{align*}
        \sum_{k=1}^K y_k v^k + \sum_{\ell=K+1}^L z_\ell w^\ell \le 0.
    \end{align*}
Moreover, if $y=0$, then     $\sum_{\ell=K+1}^L z_\ell w^\ell \neq 0.$
\end{enumerate}
\end{lemma}
Let $v^k, k=1,\dots, K$ be the vectors of the form $v^k=u^k-\tilde u^k$  where $u^k, \tilde u^k \in U'$ and $u^k P^* \tilde u^k,$ and $w^\ell, \ell=K+1,\dots, L$ be the vectors of the form $w^\ell=u^\ell-\tilde u^\ell$ where $u^\ell, \tilde u^\ell \in U'$ and $u^\ell I^* \tilde u^\ell.$ It suffices to show  that the case \ref{case-1} of \Cref{blume-lemma7} holds, as then the conclusion of \Cref{blume-lemma8} holds when we set the solution $x$ for the case \ref{case-1} of \Cref{blume-lemma7} as $\phi^{U'}$. To show this, we will show that the case \ref{case-2} of \Cref{blume-lemma7} does not hold. Suppose to the contrary that  the case \ref{case-2} of \Cref{blume-lemma7} holds, with solution $(y,z)$. We will obtain a contradiction.

Let $\tilde z_\ell:=|z_\ell| $ and $\tilde w^\ell:=sgn( z_\ell)  w^\ell$.
Then $(y,\tilde z) > 0$ and $\sum_{k=1}^K y_k v^k + \sum_{\ell=K+1}^L \tilde z_\ell \tilde w^\ell \le 0.$\footnote{To see why $(y,\tilde z) > 0$, note that if $(y,\tilde z) = 0$, then $y=0$ and $\sum_{\ell=K+1}^L z_\ell w^\ell = 0,$ a contradiction to case \ref{case-2} of \Cref{blume-lemma7}.} 
Because  $v^k$ for each $k=1,\dots,K$ is of the form  $u^k-\tilde u^k$ with $u^k P^* \tilde u^k$ while $\tilde w^\ell$ for each $\ell=K+1,\dots,L$ is of the form $u^\ell-\tilde u^\ell$ with $u^\ell I^* \tilde u^\ell$, it follows that 
\begin{align} \label{ineq}
      \sum_{k=1}^K y_k \tilde u^k + \sum_{\ell=K+1}^L \tilde z_\ell\tilde u^\ell \ge  \sum_{k=1}^K y_k u^k + \sum_{\ell=K+1}^L \tilde z_\ell u^\ell.
    \end{align}
 Since $R^*$ satisfies the \textsf{Pareto Principle},   this implies that 
\begin{align}
         \left ( \sum_{k=1}^K y_k \tilde u^k + \sum_{\ell=K+1}^L \tilde z_\ell \tilde u^\ell \right  ) R^* \left ( \sum_{k=1}^K y_k u^k + \sum_{\ell=K+1}^L \tilde z_\ell u^\ell \right  ).\label{social-preference-relation}
    \end{align}

Meanwhile, since $u^k P^* \tilde u^k$ for each $k$ and $u^\ell I^* \tilde u^\ell$ for each $\ell$ by assumption, by repeated applications of \textsf{Invariance},\footnote{Specifically, for any $u,v, \tilde u, \tilde v \in \mathbb R^n$ with $u R^* v$ and $\tilde u R^* \tilde v$ as well as  $\alpha,\beta \in \mathbb R_+$, we have $(\alpha u + \beta \tilde u) R^* (\alpha v + \beta \tilde v)$, with $R^*$ replaced with $P^*$ if $u P^* v$ and $\alpha>0$. This is because  $(\alpha u + \beta \tilde u) R^* (\alpha v + \beta \tilde u)$ and $(\alpha v + \beta \tilde u) R^* (\alpha v + \beta \tilde v)$ by \textsf{Invariance}, and hence by transitivity of $R^*$, the desired relationship follows (and the relation being strict if $u P^* v$ and $\alpha>0$).} it follows that \[\left ( \sum_{k=1}^K y_k u^k + \sum_{\ell=K+1}^L \tilde z_\ell u^\ell \right ) P^* \left ( \sum_{k=1}^K y_k \tilde u^k + \sum_{\ell=K+1}^L \tilde z_\ell \tilde u^\ell \right ),\]  if there exists $k$ with $y_k>0$,  a contradiction to \cref{social-preference-relation}. If $y_k=0$ for all $k$, then $\sum_{\ell=K+1}^L z_\ell w^\ell \neq 0$ by assumption, so we have by \cref{ineq},
\begin{align*}
          \sum_{\ell=K+1}^L \tilde z_\ell u^\ell < \sum_{\ell=K+1}^L \tilde z_\ell \tilde u^\ell,
          \end{align*}
holds.\footnote{To see this, suppose for contradiction that \cref{ineq} holds as equality. Then it would imply $\sum_\ell \tilde z_\ell \tilde w^\ell  = \sum_\ell z_\ell w^\ell =0 $, where the  first equality follows from the definitions of $\tilde z_\ell$ and $\tilde w^\ell$.} Since $R^*$ satisfies the \textsf{Pareto Principle}, this implies that 
\begin{align}
          \left ( \sum_{\ell=K+1}^L \tilde z_\ell \tilde u^\ell \right ) P^*           \left ( \sum_{\ell=K+1}^L \tilde z_\ell u^\ell \right ).
          \label{social-preference-relation-2}
    \end{align}

Meanwhile, recalling again $u^\ell I^* \tilde u^\ell$ for each $\ell=K+1,..., L$, and applying \textsf{Invariance} repeatedly, we have that 
\[\left ( \sum_{\ell=K+1}^L \tilde z_\ell u^\ell \right ) I^* \left ( \sum_{\ell=K+1}^L \tilde z_\ell \tilde u^\ell\right ), \] 
a contradiction to \cref{social-preference-relation-2}. This completes the proof.
\end{proof}

Now we proceed to complete the theorem. To do so, we define 
\begin{align*}
    \mathcal U:=\{U' \subset \mathbb R^n : |U'|<\infty, \exists u, v \in U', u P^* v\},
\end{align*}
and, for each $u \in \mathbb R^n,$ define the collection $U^u \subset \mathcal U$ by 
\begin{align*}
    \mathcal U^u:=\{U' \in \mathcal U  : u \in U'\}.
\end{align*}
Then, we consider a family $V$ of collections defined by
\[
V:=\{\mathcal U^u : u \in \mathbb R^n\}.
\]
Now, let $u^1,u^2,\dots,u^m \in \mathbb R^n$ and consider \[ \bigcap_{k=1}^m \mathcal U^{u^k}.\]
Note that  $ \bigcap_{k=1}^m \mathcal U^{u^k} \neq \emptyset$ since $\{u^1,u^2,\dots,u^m\} \in U^{u^k}$ for all $k \in \{1,\dots,m\},$ that is, the family $V$ has the finite intersection property.

Now, we invoke the following fact:
\begin{lemma}[Proposition 3.6 of \citet{joshi1983introduction}]
A collection of sets has the finite intersection property if and only if there is a filter that is a superset of that collection. 
\end{lemma}
The preceding argument and the claim imply that there exists a filter that is a superset of $V.$ By Zorn's lemma, there exists an ultrafilter $\Omega$ that is a superset of the above filter, and hence a superset of $V.$ This ultrafilter is clearly free, that is, the intersection of all sets in the collection $\Omega$ is empty: This is because all sets of the form $\mathcal U^u$ is an element of $\Omega$, and $\cap_{u \in \mathbb R^n}\mathcal U^u =\emptyset$.\footnote{To show $\cap_{u \in \mathbb R^n}\mathcal U^u =\emptyset$, suppose for contradiction that $\cap_{u \in \mathbb R^n}\mathcal U^u$ is nonempty, so there exists $U' \in \cap_{u \in \mathbb R^n}\mathcal U^u.$ Then, by definition $U'$ is a finite subset of $\mathbb R^n$. So there exists $v \in \mathbb R^n$ such that $v \not\in U'$. This implies $U' \not\in U^v,$ so $U' \not\in \cap_{u \in \mathbb R^n}\mathcal U^u$, a contradiction.}

Now, consider the set of functions from $\mathcal U$ to $\mathbb R.$ We say that two functions $r$ and $s$ are equivalent if $\{U' \in \mathcal U: 
 r(U')=s(U')\}$ is in $\Omega.$ It is straightforward to show that this is an equivalence relation and the set $^* \mathbb R$ of those equivalence classes is an ordered field which extends $\mathbb R.$\footnote{See \cite{blume-unpublished} for proofs of this property as well as others in this paragraph.}
 We call $^* \mathbb R$ the set of hyperreal numbers. It is well known that addition and multiplication defined on $\mathbb R$ extend  readily to $^*\mathbb R$ by pointwise operations, while the orders $\ge$ and $>$ also extend in a similar manner. It is also standard to show that there exists an infinitesimal number in $^* \mathbb R$.

Now,  let $\psi \in (^*\mathbb R)^n$ be such that, for each $i \in I$, $\psi_i$ is  the equivalence class that contains the element $r_i$ such that $r_i(U')=\phi^{U'}_i$ for each $U' \in \mathcal U$ and  $\phi^{U'}$ given in \Cref{blume-lemma8}. 
Consider any $u,v \in \mathbb R^n.$ We know that $\mathcal U^u \cap \mathcal U^v \in \Omega$. For any $U' \in \mathcal U^u \cap \mathcal U^v$, $u, v \in U'$, so if $u R^* v,$ then $\langle \phi^{U'}, u \rangle \ge \langle \phi^{U'}, v \rangle$. By construction of $\psi,$ this implies that $\langle \psi, u \rangle \ge \langle \psi, v \rangle$. A similar argument shows that $u P^* v$ implies $\langle \psi, u \rangle > \langle \psi, v \rangle$. Finally, for each $i \in I$, note that $\chi^i P^* 0$ as $R^*$ satisfies the \textsf{Pareto Principle}. Therefore, it follows that $\psi_i = \langle \psi, \chi^i \rangle > \langle \psi, 0 \rangle=0$, showing that $\psi \in (^* \mathbb R_{++})^n.$ This completes the proof.

\subsection{Proof of \cref{cor:nash-bargaining}}\label{s:nash-bargaining}

\begin{proof}
 For any $u \in \mathbb R_{++}^n$, let $\log u:=(\log u_i)_i$ and, moreover, for any $u \in \mathbb R^n$, let $e^u:=(e^{u_i})_i$. Let us also redefine $U:=U \cap \mathbb R^n_{++}$ for notational simplicity. Now, let
$
\tilde U:=\{\log u | u \in U\}
$. 

\begin{claim}\label{simple-observation}
Suppose $u \in U$ and let $\tilde u=\log u$. Then, $u$ is Pareto optimal with respect to $U$ if and only if $\tilde u$ is Pareto optimal with respect to $\dc(\tilde U).$ 
    \end{claim}
\begin{proof}
  First, note that $u \in U$ is Pareto optimal with respect to $U$ if and only if $\tilde u$ is Pareto optimal with respect to $\tilde U$ because $\log(.)$ is a strictly increasing function. Second, note that $\tilde u \in \tilde U$ is Pareto optimal with respect to $\tilde U$ if and only if it is Pareto optimal  with respect to $\dc(\tilde U)$ because Pareto optimality is invariant to adding utility vectors to a set that are smaller than existing utility vectors. These two observations imply the conclusion of this claim.
\end{proof}

\begin{claim}
Suppose that $U$ is convex. Then $\dc(\tilde U)$ is convex.
\end{claim}
\begin{proof}
    Suppose $\tilde u, \tilde u' \in \dc(\tilde U),$ and $\lambda \in [0,1].$ By definition of $\dc(.)$, it follows that there exist $\tilde v, \tilde v' \in \tilde U$ such that $\tilde u \le \tilde v, \tilde u' \le \tilde v'.$ Therefore, by definition of $\tilde U,$ there exist $v, v' \in U$ such that $\tilde v=\log v, \tilde v' = \log v'.$

    Because $U$ is convex, $w:=\lambda v + (1-\lambda) v'$ is in $U$. This implies that $\tilde w:=\log w$ is in $\tilde U.$ Now, because $\log(.)$ is a concave function, we have that
    $$
    \lambda \tilde v+(1-\lambda) \tilde v' =     \lambda \log v+(1-\lambda) \log v' \le \log(\lambda  v+(1-\lambda) v')=\log w=\tilde w,
    $$
so     $\lambda \tilde v+(1-\lambda) \tilde v' \in \dc(\tilde U)$.  Because $\tilde u \le \tilde v$ and $\tilde u' \le \tilde v'$, it follows that $\lambda \tilde u+(1-\lambda) \tilde u' \in \dc(\tilde U)$, as desired.
\end{proof}
Now we proceed to prove the theorem.\\

 \noindent \emph{The ``if'' direction:} Suppose that $u \in U$ is an SNBS over $U$ for some bargaining units $\mathcal I$ and  bargaining powers $\Psi$ (satisfying the requirement). Then,  $u \in V^T$ where $V^T=U$ and $V^t := \arg\max_{v \in V^{t-1}} \prod_{i\in I^t}  v_i^{\psi_i^t}$ for each $t \ge 1.$ This implies that $V^t := \arg\max_{v \in V^{t-1}} \sum_{i\in I^t}  \psi_i^t \log v_i.$
    Setting $\tilde u :=\log u$ and noting that $\psi$ is a nonnegative and eventually positive sequence, $\tilde u$ is a SUWM solution of $\dc(\tilde U)$ with respect to $\psi$. Therefore, by \Cref{theorem:characterize-PO}, $\tilde u$ is Pareto optimal in $\dc(\tilde U).$ Then, by \Cref{simple-observation}, $u$ is Pareto optimal with respect to $U$, as desired.\\
    
\noindent \emph{The ``only if'' direction:} Suppose that $u \in U$ is Pareto optimal with respect to $U$. Then, by \Cref{simple-observation}, $\tilde u := \log u$ is Pareto optimal with respect to $\dc(\tilde U).$ Therefore, by \Cref{theorem:characterize-PO}, there exist a sequence $\phi:=(\phi^t)_t$ of nonnegative and eventually positive welfare weight vectors such that $\tilde u \in \tilde U^T$ where $\tilde U^0=\dc(\tilde U)$ and
    $\tilde U^t := \arg\max_{\tilde u' \in \tilde U^{t-1}} \sum_{i\in I^t}  \phi_i^t \tilde u'_i$ for each $t\ge 1$. Then, for $u=e^{\tilde u}$, we have $u \in U^T$, where 
        $U^t := \arg\max_{u' \in U^{t-1}} \prod_{i\in I^t}   (u'_i)^{\phi_i^t}=\arg\max_{u' \in V^{t-1}} \prod_{i\in I^t}   (u'_i)^{\psi_i^t}$ for each $t \ge 1$, where $V^t:=\{e^{\tilde v}| \tilde v \in U^t\}$ and $\psi^t_i:=\frac{\phi^t_i}{\sum_{j \in I^t}\phi^t_j}$, so $u$ is an SNBS, as desired (note that $\psi$ satisfies the condition required of bargaining powers for SNBS).
\end{proof}

\subsection{Proof of \Cref{SNBS-axiomatization}} 
\label{s:snbs-axiomatization}
 To prove  (ii) implies (i), we only check that $R^*$ satisfies \textsf{Log Invariance} since the other axioms are rather straightforward to check. To do so, suppose that $u R^* v$  so that for some $t \le T$, $\prod_{i\in I^s}  u_i^{\psi_i^s} = (>) \prod_{i\in I^s}  v_i^{\psi_i^s}$ for   $s < (=) t +1$, which implies \begin{align}
     \label{log-comp}      \sum_{i \in I^s} \psi_i^s \ln u_i 
  =  (>)  \sum_{i \in I^s} \psi_i^s \ln  v_i \mbox{ for } s  < (=) t+1. 
 \end{align} Consider now any $u', v'$ such that    for some $a \in \mathbb{R}^n$ and $b \in \mathbb{R}_{++}$,  $\ln u'_i =a_i + b \ln u_i $ and $\ln v'_i  = a_i + b \ln v_i$ for all $i\in I$.   By \eqref{log-comp}, we have
 \begin{align*}
      \sum_{i \in I^s} \psi_i^s \ln u_i'   &  =   \sum_{i \in I^s}  \psi_i^s a_i  + b \left( \sum_{i \in I^s}  \psi_i^s \ln u_i  \right)  \\ & = (>)  \sum_{i \in I^s}  \psi_i^s a_i  + b \left( \sum_{i \in I^s}  \psi_i^s \ln v_i  \right) =  \sum_{i \in I^s} \psi_i^s \ln  v_i' \mbox{ for } s < (=)t+1,
 \end{align*}   which implies $\prod_{i\in I^s}  (u_i')^{\psi_i^s}  = (>) \prod_{i\in I^s}  (v_i')^{\psi_i^s}$ for $s < (=) t +1$ or $u' R^* v'$ as desired. 

 We now  prove that (i) implies (ii). Given any  $u \in \mathbb{R}^n$, let $e^u$ denote a vector $ (e^{u_i})_{i\in I}$ and $\ln u $ denote a vector $(\ln u_i)_{i\in I}$ for simplicity. Consider any welfare ordering $R^*$ on $\mathbb{R}_{++}^n$ that satisfies the three axioms. Let us define another ordering $\tilde R^*$ on $\mathbb{R}^n$ as follows:  for any $u, v \in \mathbb{R}^n$, $u \tilde R^* v$ if  $e^u R^* e^v$. 
 It is straightforward to check that $\tilde R^*$ satisfies \textsf{Pareto Principal}, \textsf{Invariace}, and \textsf{Weak Continuity}. 
 In particular, \textsf{Invariance} holds for the following reason.  Consider any $u,v$  such that $u \tilde R^* v$ or equivalently $\tilde u: = e^u R^* e^v =: \tilde v$. \textsf{Invariance} requires that for any $a \in \mathbb{R}^n$ and $b \in \mathbb{R}_{++} $,  $ u':= (a + b u) \tilde R^* ( a + bv) =: v'$ or equivalently $\tilde u' := e^{u'} R^* e^{v'}  =: \tilde v'$, which  follows from   $\tilde u R^* \tilde v$ and  \textsf{Log Invariance} since  $\ln \tilde u'  = a +b  \ln \tilde u $ and $\ln \tilde v' = a +b \ln  \tilde v$.  Since $\tilde R^*$ satisfies the \textsf{Pareto Principle}, \textsf{Invariance}, and \textsf{Weak Continuity}, \Cref{social-preference-theorem}  implies that there exists a nonnegative and eventually positive sequence of weight vectors $\Phi=(\phi^1, \phi^2, ..., \phi^T)$ such that for any $u,v \in \mathbb{R}^n$,  $u \tilde R^* v$ if and only if $u$ sequentially utilitarian welfare dominates $v$ according to $\Phi$. For each $t=1,\ldots, T$, let  $I^t = \supp{\psi^t}$ and  $\psi_i^t =\frac{\phi^t_i}{\sum_{i\in I^t} \phi^t_i}$ for all $i \in I^t$. Consider any $u,v$ with $u R^*v$. Then, $\tilde u:=\ln u \tilde R^* \ln v =: \tilde v$ so that $\tilde u$ sequentially utilitarian welfare dominates $\tilde v$ according to $\Psi = (\psi^1,\ldots,\psi^T)$: that is, for some $t \le T$, $\sum_{i \in I^s} \psi_i^s \tilde u_i  = (>)  \sum_{i \in I^s} \psi_i^s \tilde v_i$ for $s < (=) t+1$, which implies that $\prod_{i\in I^s}  u_i^{\psi_i^s} = (>) \prod_{i\in I^s}  v_i^{\psi_i^s}$ for   $s < (=) t +1$, meaning $u$ sequentially Nash welfare dominates $v$ acording to $\mathcal{I}$ and $\Psi$.  
 
 It is straightforward, and thus omitted, to prove that $u$ sequentially Nash welfare dominating $v$ according to $\mathcal{I}$ and $\Psi$ implies $u R^* v$.

\subsection{Proof of \Cref{prop:PLC-char}}
\label{sec:plc-cha-proof}

	\emph{The ``only if'' direction:}  By    \cref{theorem:characterize-PO},  for any Pareto optimal $u \in U \cap \mathbb{R}_{++}^n$,   there are nonnegative and eventually positive weights  $(\phi^1,\ldots,\phi^T)$ sequentially maximized by $u$. 
 Letting $U^t$ be defined as in \cref{eq:define-u-t}, we have $u\in U^t$ for all $t =1,\ldots,T$.      Consider  weights $(\psi^t)_{t=1}^T$ defined as  $\psi^1 =\phi^1$ and $\psi^t =  \frac{\langle  \psi^{t-1}, u \rangle }{\langle \phi^t,u \rangle } \phi^t$ for each $t\ge 2$. First, using  the fact that  $u \in \mathbb{R}_{++}^n$ and $ \phi^t  \in \mathbb{R}_+^n, \forall t$,  it is straightforward to see  that   $\langle  \psi^{t}, u \rangle  > 0$ and $\langle \phi^t,u \rangle  >0$ for every $t$.   Thus,    $\langle \psi^t, v  \rangle  \ge  \langle \psi^t, v' \rangle  $ if and only if $\langle \phi^t, v \rangle  \ge  \langle \phi^t, v' \rangle$ for all $v,v' \in U$. Note also  that  $\langle \psi^t , u \rangle  =     \frac{\langle \psi^{t-1}, u \rangle}{\langle  \phi^t, u \rangle}    \langle  \phi^t, u \rangle   =  \langle \psi^{t-1}, u\rangle$ for each $t \ge 2$. Thus, we have $W  (u) = \langle \psi^t, u \rangle$ for all $t =1, \ldots,T$. Also, for any  $v \in U^T$, we have  $\langle  \psi^t, u \rangle =\langle  \psi^t, v \rangle$ for all $t$, so  $W(u) = W(v)$.  For any $v \not \in U^T$, there is some $t$ such that $v  \not\in  U^t$  so  $\langle  \psi^t, v \rangle < \langle  \psi^t, u \rangle$, implying $W(v) < W(u)$.  Thus, $u$ maximizes $W$, implying that   $W$ achieves its maximum over $U$ via eventually positive weights.
	
\vskip 10pt
\noindent	\emph{The ``if'' direction:} Consider any $u \in U$  maximizing a  PLC function  $W$ that  achieves its maximum via eventually positive weights.  Suppose for contradiction that $u$ is not Pareto optimal.  Then, there is some $v > u$  so that  $\langle \psi^t, v  \rangle  \ge \langle \psi^t, u\rangle$ for all $t  = 1, \ldots, T$. As $u$ maximizes $W$, so does $v$.  Given this and the fact that   $W$ achieves    its maximum via eventually positive weights,  we must have   $ \langle \psi^T, v  \rangle  = W (v) = W (u) =\langle \psi^T, u  \rangle    $ or  $\langle  \psi^T , v -u  \rangle  =0 $, which is  a contradiction since $ \psi^T \gg 0$ and $v > u$.

\section{Pareto optimality ($U^P$) and positive utilitarianism ($U^{++}$)}\label{ss:chris-part}

This section aims to discover natural conditions for $U^P$ to coincide with $U^{++}$. The following lemma, which follows easily from the proof of \cref{theorem:characterize-PO}, is the key to our investigation. 

\begin{lemma}\label{cor:exposed-faces-easy}
If $u$ is a maximal element of $U$ that lies in the relative interior of an exposed face of $\text{dc}(U)$ then $u$ maximizes a positive weight vector over $U$. 
\end{lemma}
\begin{proof}
In the proof of the ``only if'' part of \cref{theorem:characterize-PO} in \cref{s:proof}, if $u$ is a maximal element of $U$ that lies in the relative interior of an exposed face of $\text{dc}(U)$, then $T =1$ in \cref{step3} and by \cref{step4} we know $\phi^1$ is positive. Hence, $\Phi =(\phi^1)$ and so by \cref{step5}, we conclude that $u$ maximizes the positive weight vector $\phi^1$ over $U$.
\end{proof}

To characterize when $U^P = U^{++}$, we need to introduce a few notions and establish their properties. 
First, the \emph{normal cone of $U$ at a point} $u \in U$ is the set 
\begin{equation*}
N_U(u) = \left\{\phi \in \R^n \mid \langle \phi, u \rangle \ge \langle \phi, v \rangle \text{ for all } v \in U \right\}.
\end{equation*}
If $\phi \in N_U(u)$ then $u$ is a maximizer of the linear function $\langle \phi, u \rangle$ over the set $U$. 
Then, \emph{the normal cone of a face} $F \subset U$, denoted $N_U(F)$, as the normal cone of each of its relative interior points. Next, the \emph{relative boundary of} $F$ is  defined as $F \setminus \mathrm{ri}(F)$.

The next two lemmas give us some properties of these notions.

\begin{lemma}\label{lemma:normal-cone-on-faces}
Let $F$ be a face of a convex set $U$. Then, every point in the relative interior of $F$ has the same normal cone. 
\end{lemma}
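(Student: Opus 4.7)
The plan is to show both inclusions $N_U(u_1) \subset N_U(u_2)$ and $N_U(u_2) \subset N_U(u_1)$ for any $u_1, u_2 \in \relint(F)$; by symmetry I only need to argue one direction. The essential idea is that if some $\phi$ attains its maximum over $U$ at $u_1 \in \relint(F)$, then the exposed face $H := \arg\max_{v \in U} \langle \phi, v \rangle$ must contain all of $F$, hence in particular $u_2$.

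First, fix $\phi \in N_U(u_1)$ and let $u_2 \in \relint(F)$ be arbitrary. By the definition of relative interior (as stated in the excerpt), there exists $\lambda > 0$ such that the point $w := u_1 + \lambda(u_1 - u_2)$ belongs to $F \subset U$. A direct algebraic check shows
\begin{equation*}
u_1 \;=\; \tfrac{1}{1+\lambda}\, w \;+\; \tfrac{\lambda}{1+\lambda}\, u_2,
\end{equation*}
so $u_1$ is a strict convex combination of the two points $w, u_2 \in U$.

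Now I apply $\phi$ to both sides. Since $\phi \in N_U(u_1)$, we have $\langle \phi, w \rangle \le \langle \phi, u_1 \rangle$ and $\langle \phi, u_2 \rangle \le \langle \phi, u_1 \rangle$. Expanding the convex combination,
\begin{equation*}
\langle \phi, u_1 \rangle \;=\; \tfrac{1}{1+\lambda}\, \langle \phi, w \rangle \;+\; \tfrac{\lambda}{1+\lambda}\, \langle \phi, u_2 \rangle,
\end{equation*}
which forces both inequalities to hold as equalities. In particular $\langle \phi, u_2 \rangle = \langle \phi, u_1 \rangle = \max_{v \in U}\langle \phi, v \rangle$, so $\phi \in N_U(u_2)$. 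Since $u_2 \in \relint(F)$ was arbitrary, $N_U(u_1) \subset N_U(u_2)$, and exchanging the roles of $u_1$ and $u_2$ gives the reverse inclusion.

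This is a short convex-analysis argument, and I expect no serious obstacle: the only delicate point is choosing the right convex combination from the relative-interior definition used in the paper (which is the ``stretching'' formulation $u_1 + \lambda(u_1 - u_2) \in F$ rather than the more familiar $\varepsilon$-ball formulation). Once that convex combination is in hand, the maximality of $\langle \phi, u_1\rangle$ over $U$ immediately propagates to $u_2$ because a maximum of a linear functional on a convex combination is attained only when it is attained at each summand.
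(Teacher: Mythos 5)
Your proof is correct and follows essentially the same route as the paper's: both arguments use the ``stretching'' definition of the relative interior to place $w = u_1 + \lambda(u_1 - u_2)$ in $F$ and then exploit linearity of $\langle \phi, \cdot\rangle$ to transfer maximality from $u_1$ to $u_2$. The only cosmetic difference is that you argue directly via a convex-combination identity forcing equalities, whereas the paper phrases the same computation as a contradiction ($\langle \phi, w\rangle > \langle \phi, u_1\rangle$ would violate $\phi \in N_U(u_1)$).
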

\begin{proof}
Let $u, u'$ be distinct in the relative interior of $F$ and suppose $N_U(u)$ contains an element $\phi$ not in $N_U(u')$. This implies $\langle \phi, u \rangle > \langle \phi, u' \rangle$. Since $u$ is the relative interior, the point $v = u + \lambda (u - u')$ lies in $F$ for a sufficiently small positive $\lambda$. However, $\langle \phi, v \rangle = \langle \phi, u \rangle + \lambda \langle \phi, u - u' \rangle >\langle \phi, u \rangle$, violating the assumption that $\phi$ is in $N_U(u)$.
\end{proof}

\begin{lemma}\label{lemma:boundaries-contain-more}
Let $F$ be a face of a convex set $U$. Then, every  point $u$ in the relative boundary of $F$ has $N_U(u) \supset N_U(F)$.
\end{lemma}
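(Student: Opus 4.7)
My plan is to prove the stronger statement that $N_U(u) \supset N_U(F)$ for \emph{every} $u \in F$ (this matches the weak-inclusion symbol $\supset$ used throughout the paper), which in particular covers every relative boundary point. The strategy is to take an arbitrary $\phi \in N_U(F)$ and show directly that $u$ also maximizes $\langle \phi, \cdot\rangle$ over $U$. The key idea is to use the paper's definition of relative interior to ``extend past'' $u$ inside $F$, which turns $u$ into a convex combination of a relative-interior point of $F$ and another point of $F$, and then exploit the fact that both endpoints lie in $U$.

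Concretely, I would proceed as follows. Fix $\phi \in N_U(F)$ and set $\beta_\phi := \max_{w \in U}\langle \phi, w\rangle$. By \cref{lemma:normal-cone-on-faces}, every point of $\mathrm{ri}(F)$ has the common normal cone $N_U(F)$, so $\langle \phi, v\rangle = \beta_\phi$ for every $v \in \mathrm{ri}(F)$. Pick any such $v$. By the definition of $\mathrm{ri}(F)$ recalled before \cref{lemma:rockafellar-union}, since $u \in F$, there exists $\lambda > 0$ such that
\begin{equation*}
w := v + \lambda(v - u) \in F \subset U.
\end{equation*}
Since $w \in U$, we have $\langle \phi, w\rangle \le \beta_\phi$. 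Expanding,
\begin{equation*}
(1+\lambda)\langle \phi, v\rangle - \lambda \langle \phi, u\rangle \le \beta_\phi,
\end{equation*}
and substituting $\langle \phi, v\rangle = \beta_\phi$ gives $\lambda \langle \phi, u\rangle \ge \lambda \beta_\phi$, i.e., $\langle \phi, u\rangle \ge \beta_\phi$. The reverse inequality $\langle \phi, u\rangle \le \beta_\phi$ holds trivially because $u \in F \subset U$. Hence $\langle \phi, u\rangle = \beta_\phi$, so $\phi \in N_U(u)$.

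Since $\phi \in N_U(F)$ was arbitrary, this shows $N_U(F) \subset N_U(u)$, establishing the lemma. I do not anticipate any real obstacle here: the only delicate step is the extension of $v$ past $u$, which is immediate from the definition of relative interior already recorded in the paper. (As a sanity check, one sees why strict containment would require additional hypotheses on $u$: if $F$ happens to be exposed by $\phi$, then $\phi \in N_U(F)$ gives maximality on the entire $F$, so $\phi$ automatically lies in $N_U(u)$ with no room for a strictly larger cone — the ``extra'' normals at a relative boundary point come from supporting hyperplanes of the strictly smaller face containing $u$ in its relative interior, which can be invoked when strict containment is needed downstream.)
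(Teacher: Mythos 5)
Your proof is correct, and it takes a somewhat different route from the paper's. The paper argues by contradiction and by a limiting argument: it invokes the Line Segment Principle (\cref{lemma:line-segment-principle}) to produce relative-interior points of $F$ arbitrarily close to $u$, all of which attain the value $\beta_\phi$, and then uses continuity of $\langle \phi, \cdot\rangle$ to force $\langle \phi, u\rangle = \beta_\phi$. You instead avoid any limit: you use the ``extend past $u$'' characterization of $\mathrm{ri}(F)$ recorded in the paper to write $w = (1+\lambda)v - \lambda u \in F \subset U$ for some $\lambda>0$, and a single inequality $\langle\phi,w\rangle \le \beta_\phi$ together with $\langle\phi,v\rangle=\beta_\phi$ yields $\langle\phi,u\rangle \ge \beta_\phi$, hence equality. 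This is essentially the same algebraic device the paper itself uses to prove \cref{lemma:normal-cone-on-faces}, so your argument is arguably more uniform with the surrounding material, slightly more elementary (no appeal to the Line Segment Principle or to closedness of $U$, which the Line Segment Principle as stated assumes), and it delivers the marginally stronger conclusion that $N_U(u)\supset N_U(F)$ for \emph{every} $u\in F$, not only relative boundary points. Both proofs rest on \cref{lemma:normal-cone-on-faces} to make $N_U(F)$ well defined and to guarantee $\langle\phi,v\rangle=\beta_\phi$ at the chosen relative-interior point, so the dependence structure is unchanged.
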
 
\begin{proof}
Let $u$ be in the relative boundary of $F$. Suppose there is a weight vector $\phi$ in $N_U(v)$ (where $v$ is any relative interior element of $F$) that is not in $N_U(u)$. That is, 
\begin{equation}\label{eq:condition-to-contradict}
\langle \phi, u \rangle \neq \langle \phi, v \rangle.
\end{equation}
By the definition of the relative interior, we can get an element of the relative interior of $F$ arbitrarily close to $u$, which yields a contradiction of the continuity of $\langle \phi, \cdot \rangle$ because of \cref{eq:condition-to-contradict}.
\end{proof}

We are now ready to  provide the condition that characterizes when $U^P = U^{++}$:  
\begin{proposition}\label{theorem:characterize-maximal}
Let $U$ be a closed convex set. Then $U^P = U^{++}$ if and only if every maximal element of $U$ belongs to some exposed maximal face of $\text{dc}(U)$.
\end{proposition} 
\begin{proof}
\noindent \emph{The ``if'' direction.} Observe that $U^{++} \subset U^P$ is immediate from Proposition 3.23 in \cite{bewley2009general}. It remains to show that $U^P \subset U^{++}$. Let $u \in U^P$. If $u$ lies in the relative interior of an exposed face of $\text{dc}(U)$, then $u \in U^{++}$ from \cref{cor:exposed-faces-easy}. The remaining case is where $u$ lies on the relative boundary of a maximal exposed face $F$ of $\text{dc}(U)$. Since $F$ is a maximal exposed face, then an element $v$ in its relative interior maximizes a positive weight vector $\phi$, again by \cref{cor:exposed-faces-easy}. By \cref{lemma:normal-cone-on-faces}, this implies that the normal cone $N_U(F)$ of face $F$ contains $\phi$ and so, by \cref{lemma:boundaries-contain-more}, the normal cone $N_U(u)$ of the point $u$ contains $\phi$. In other words, $u$ maximizes the positive weight vector $\phi$. This completes the proof. 
\vskip 10pt
\noindent \emph{The ``only if'' direction.} Let $u$ be a maximal element of $U$. By the equivalence of $U^P$ and $U^{++}$, $u$ maximizes a positive weight vector $\phi$. Let $F = \arg\max_{v \in U} \langle \phi, v \rangle$. We claim that $F$ is a maximal exposed face of $\text{dc}(U)$, which contains $u$. The fact that $F$ is maximal in $\text{dc}(U)$ follows since Proposition 3.23 in \cite{bewley2009general} (along with \cref{lemma:maximal-face}) implies $F$ is maximal in $U$ and thus maximal in $\text{dc}(U)$ by \cref{lemma:downward-closure-suffices}. Suppose to the contrary that $F$ is not exposed in $\text{dc}(U)$. Then, there must exist an element $u' \in \text{dc}(U) \setminus U$ that maximizes $\phi$ but is not in $F$. However, since $u' \in \text{dc}(U) \setminus U$, there must exist $u'' \in U$ such that $u' \le u''$ and $u'_i < u''_i$ for some index $i$. But this implies that  $\langle \phi, u \rangle \ge \langle \phi, u'' \rangle > \langle \phi, u' \rangle$, where the weak inequality holds by the definition of $F$ and the strict inequality holds since $\phi$ is positive. This yields a contradiction and so we conclude that $F$ is an exposed face of $\text{dc}(U)$.  
\end{proof} 

We now discuss a few of the nuances in the statement of \cref{theorem:characterize-maximal}. First, the condition cannot be weakened so that every maximal element of $U$ simply lies in a (potentially nonmaximal) exposed face of $\text{dc}(U)$. Consider our canonical example in \cref{fig:illustration-of-sequential-converse}. The point $u$ lies on an exposed face of $\text{dc}(U)$, but this face is not a maximal face of $\text{dc}(U)$.

\cref{fig:illustration-of-sequential-converse} also demonstrates that it is not sufficient for a point  to lie on a maximal exposed face of $U$ (as opposed to $\text{dc}(U)$) to guarantee it maximizes a positive weight vector. Consider the point $u''$, which is a maximal exposed extreme point of $U$ but does not maximize any positive weight vector over $U$. However, $u''$ does not lie on a maximal exposed face of $\text{dc}(U)$ and so does not contradict the theorem.

 Given the above nuance, a simpler sufficient condition may be useful. Consider the  setting where all maximal faces of $\text{dc}(U)$ are exposed.  

\begin{corollary}\label{cor:maximals-and-exposed-faces}
If $U$ is a closed convex set such that all maximal faces of $\text{dc}(U)$ are exposed, then $U^P = U^{++}$.%
%
%
\footnote{This cannot be derived easily from \citet{arrow1953abb}. To see this, recall that they establish $U^{++} \subset U^P \subset \text{cl}(U^{++})$. This implies that if $U^{++}$ is closed then $U^P = U^{++}$. However, in the ``tilted cone'' in \cref{fig:tilted-cone}, $U^{++}$ is not closed since the point $K$ does not lie in $U^{++}$ but is the limit point of elements in $U^{++}$ (indicated by the line in the figure). However, it is straightforward to check that $U^P$ and $U^{++}$ coincide. One can also check that all maximal faces of $\text{dc}(U)$ for $U$ in \cref{fig:tilted-cone} are exposed, the condition of \cref{cor:maximals-and-exposed-faces}.}
%
%
 
\end{corollary}
\begin{proof}
Note that every maximal element of $U$ lies in a maximal face of $\text{dc}(U)$ by \cref{lemma:downward-closure-suffices}. This and the hypothesis imply that every maximal element of $U$ belongs to some exposed maximal face of $\text{dc}(U)$. Applying \cref{theorem:characterize-maximal}, we obtain the desired conclusion.
\end{proof}

\begin{figure}
\centering
\includegraphics[scale=.9]{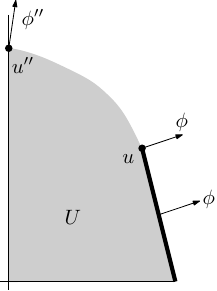}
\caption{The maximal extreme point $u$ is not exposed while $U^P = U^{++}$.}
\label{fig:counter-example-to-characterization}
\end{figure}

However, the converse of \cref{cor:maximals-and-exposed-faces} is false, as  illustrated by the example in \cref{fig:counter-example-to-characterization}.   
One sufficient condition for the hypothesis of   \cref{cor:maximals-and-exposed-faces} to hold is that $U$ is a polyhedron. In that case, all faces of $U$ are exposed  (Theorem 13.21 of \cite{soltan2015lectures}); moreover, its downward closure of a polyhedron is also a polyhedron  (Theorem 13.20 of \cite{soltan2015lectures}), so all of its faces are exposed.

Let $X$ be a polyhedral subset of $\R^m_+$ (possibly $\R^m_+$ itself). The utility function $ u_i:  X \to \R$ is \emph{piecewise-linear concave (PLC)}  if there exist finite index set $K_i$ and affine functions $u_{i,k}: \R^m_+ 
\to \R$ for each $k \in K_i$ such that $u_i (x) = \min_{k \in K_i} u_{i,k} (x)$ for all $x \in X$.\label{def:plc} The lemma uses some of the following facts.

\begin{lemma}\label{lemma:algebra-of-polyhedra}
The following properties on polyhedra hold:
\begin{enumerate}[label=(\roman*)]
\item Let $P_1, P_2, \ldots, P_n$ be a finite collection of polyhedra in $\R^m$. The Cartesian product $P_1 \times P_2 \times \cdots \times P_n$ is a polyhedron in $\R^{mn}$. 
\item Let $\pi: \R^d \to \R^m$ be an affine map and let $P$ be a polyhedron in $\R^d$. Then $\pi(P)$ is a polyhedron. 
\item All faces of a polyhedron are exposed. 
\item The downward closure of a polyhedron is also a polyhedron.  
\end{enumerate}
\end{lemma}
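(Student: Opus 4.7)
My plan is to dispatch the four items in order of difficulty: parts (i) and (iv) are one-line algebraic observations, part (iii) I would cite from Soltan (2015), and the only substantive step is part (ii), which is the classical theorem that affine images of polyhedra are polyhedra.

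For (i), write each $P_j = \{x \in \R^m : A_j x \le b_j\}$ in $H$-representation. The Cartesian product is then the solution set in $\R^{mn}$ of the block-diagonal system with coefficient matrix $\mathrm{diag}(A_1, \ldots, A_n)$ and right-hand side stacking $b_1, \ldots, b_n$. This is a finite system of linear inequalities, so $P_1 \times \cdots \times P_n$ is a polyhedron by definition. For (iv), observe that the downward closure is exactly the Minkowski sum $\dc(P) = P + \R^n_-$, where $\R^n_- = \{y \in \R^n : y \le 0\}$ is a polyhedral cone; since the Minkowski sum of two polyhedra is again a polyhedron (Theorem 13.20 of Soltan 2015, already invoked elsewhere in the paper), the conclusion follows. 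For (iii), the exposedness of every face of a polyhedron is Theorem 13.21 of Soltan (2015), which I would simply cite.

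The only real work is (ii). I would use the Minkowski-Weyl theorem: every polyhedron $P \subset \R^d$ admits a representation $P = \conv(V) + \cone(R)$ for finite sets $V, R \subset \R^d$, and conversely every finitely generated convex set of this form is a polyhedron. Writing the affine map as $\pi(x) = Lx + c$ for a linear map $L$ and translation $c$, a direct computation yields
\begin{equation*}
\pi(P) = \conv(\pi(V)) + \cone(L R),
\end{equation*}
which is again finitely generated. Applying the converse direction of Minkowski-Weyl to this representation shows that $\pi(P)$ is a polyhedron.

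The main obstacle is (ii): reading off explicit defining inequalities for $\pi(P)$ from those of $P$ is not straightforward, because one loses the variables that $\pi$ collapses. Minkowski-Weyl circumvents this by working on the $V$-representation side and then converting back; equivalently, one could argue via Fourier-Motzkin elimination to project out the collapsed coordinates one at a time. Either way, the step rests on a genuine (though classical) result bridging the $V$- and $H$-representations of polyhedra, while the other three parts reduce to definitions or direct citations.
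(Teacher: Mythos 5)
Your proposal is correct. Parts (i), (iii), and (iv) match the paper's treatment almost exactly: for (i) your block-diagonal $H$-representation is just an explicit version of the paper's argument that $P_1\times P_2 = (P_1\times\R^m)\cap(\R^m\times P_2)$ is an intersection of polyhedra; for (iii) both you and the paper simply cite Soltan (2015) (the paper's appendix points to Corollary 13.12 there, its main text to Theorem 13.21 --- a citation-numbering wrinkle, not a mathematical one); and for (iv) you give the identical argument $\dc(P)=P+\R^n_-$ plus closure of polyhedra under Minkowski sums. The one place you diverge is (ii): the paper disposes of it with a bare citation to Theorem 13.21 of Soltan, whereas you actually prove it via the Minkowski--Weyl decomposition, writing $P=\conv(V)+\cone(R)$ and computing $\pi(P)=\conv(\pi(V))+\cone(LR)$ for $\pi(x)=Lx+c$, which is a correct calculation (the translation $c$ is absorbed into the convex-hull part because the $\lambda_i$ sum to one, and disappears from the recession directions). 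What your route buys is self-containedness --- the reader sees why affine images of polyhedra are polyhedra rather than trusting a reference --- at the cost of importing Minkowski--Weyl, itself a nontrivial bridge between the $V$- and $H$-representations; the paper's citation is terser but ultimately rests on the same classical machinery.
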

\begin{proof}
(i) Consider two polyhedra in $\R^m$, $P_1$ and $P_2$.  Letting $Q_1: =P_1 \times \R^m$ and $Q_2 := \R^m \times P_2$, each $Q_k$ is a polyhedron in $\R^{2m}$, so $P_1 \times P_2 = \cap_{k=1,2} Q_k $ is a polyhedron in $\R^{2m}$. The result follows from  applying this argument repeatedly.  (ii) This is Theorem 13.21 in \cite{soltan2015lectures}. (iii) This is Corollary 13.12 in \cite{soltan2015lectures}. (iv) This follows by nothing that since $\dc(P) = P + \R^n_-$ where $\R^n_-$ is the nonpositive orthant of $\R^n$ and by applying Theorem 13.20 of \cite{soltan2015lectures}. 
\end{proof}

\begin{lemma}\label{prop:piecewise-linear-concave-gives-dc-exposed}
If each agent has a PLC utility function defined on a  polyhedron $X$ and $U$  is  defined according to \cref{eq:X-U},  then $\mathrm{dc}(U)$ is a polyhedron. 
\end{lemma}
\begin{proof}
For each $k \in K_i$, let $X_{i,k} = \{   x \in X \mid  u_{i,k} (x)  \le u_{i, k'} (x),\forall k' \in K_i \}$. Since $X$ is a polyhedron and all functions $(u_{i,k})_{k\in K_i}$ are affine,   $X_{i,k}$ is an intersection of finitely many polyhedra  and thus  a polyhedron. 

Now let $\mathcal{K}  = \{ \mathbf k =  (k_i)_{i \in I} \mid k_i \in K_i \text{ for all } i \} $. For each $\mathbf k \in \mathcal{K}$,   let $X_{\mathbf k} = \cap_{i\in I} X_{i, k_i}$ and observe that  $X_{\mathbf k}$ is a polyhedron. Also, all functions $u_1 (\cdot),\ldots,u_I (\cdot)$ are affine on $X_{\mathbf k}$ since for each $i \in I$, $u_i (x) = u_{i, k_i} (x),\forall x \in X_{\mathbf k}$.   Then, by \cref{lemma:algebra-of-polyhedra}(ii), the set $U_{\mathbf k} = \{ (u_i (x))_{i\in I} \mid x \in X_{\mathbf k}  \}$ is a polyhedron.  Observe that  $U  = \{  (u_i (x))_{i\in I} \mid x \in X  \}  = \cup_{\mathbf k \in \mathcal{K}} U_{\mathbf k}$.  While we do not know whether the set $U$, which is a union of polyhedra,  is a polyhedron, Theorem 13.19 of \cite{soltan2015lectures} shows that $\overline{U}:= \mathrm{cl} (\mathrm{conv} \cup_{\mathbf k \in \mathcal{K}} U_{\mathbf k} )  $  is a polyhedron, where $\mathrm{cl}$ and $\mathrm{conv}$ denote the closure and convex hull, respectively.  

Next, we show that  $\mathrm{dc} (U) =\mathrm{dc} (\overline{U}) $. 
By definition of $\overline{U}$,  $\mathrm{dc} (U) \subset \mathrm{dc} (\overline{U}) $ is clear. To show $  \mathrm{dc} (\overline{U}) \subset \mathrm{dc} (U) $, consider any $\tilde u \in \mathrm{conv} \cup_{\mathbf k \in \mathcal{K}} U_{\mathbf k}  $ so that   $\tilde u = \sum_{\mathbf k \in \mathcal{K}} \lambda_{\mathbf k} \tilde u_{\mathbf k}$ for some weight $(\lambda_{\mathbf k})_{\mathbf k \in \mathcal{K}}$ and $\tilde u_{\mathbf k} \in \cup_{\mathbf k' \in \mathcal{K}} U_{\mathbf k'}$.  Also, for each $\tilde u_{\mathbf k}$, we can find $\tilde x_{\mathbf k} \in X_{\mathbf k}$ such that $(u_i (\tilde x_{\mathbf k}))_{i\in I} = \tilde u_{\mathbf k}$. Letting $x = \sum_{\mathbf k\in \mathcal{K}} \lambda_{\mathbf k} \tilde x_{\mathbf k}$, observe that $x\in X$ by the convexity of $X$ and that for all $i \in I$, $u_i (x) \ge \sum_{\mathbf k\in \mathcal{K}} \lambda_{\mathbf k} u_i (\tilde x_{\mathbf k}) = \tilde u_i$  by the concavity of $u_i (\cdot)$, which means that $\tilde u \in \mathrm{dc} (U)$. Thus, $\mathrm{conv} \cup_{\mathbf k \in \mathcal{K}} U_{\mathbf k}  \subset \mathrm{dc}(U)$, implying that $  \mathrm{cl} (\mathrm{conv} \cup_{\mathbf k \in \mathcal{K}} U_{\mathbf k} )  \subset  \mathrm{dc}(U)$ since $\mathrm{dc} (U)$ is closed, from which    $\mathrm{dc} (\overline{U})  \subset \mathrm{dc} (U)$ follows, as desired.   

Lastly, observe that $\mathrm{dc} (\overline{U})  = \overline{U} + \mathbb
R_-^n$ and that both $\overline{U}$ and $\mathbb{R}_-^n$ are polyhedra, which implies (by \cref{lemma:algebra-of-polyhedra}(iv)) that  $\mathrm{dc} (\overline{U})  = \mathrm{dc}(U)$ is a polyhedron. 
\end{proof}

The following is obtained immediately from \cref{cor:maximals-and-exposed-faces,prop:piecewise-linear-concave-gives-dc-exposed}, and the fact that all faces of polyhedra are exposed. It is a clean economic setting where $U^P$ and $U^{++}$ coincide. 

\begin{proposition}\label{theorem:plc-positive-utility}
If each agent has a PLC utility function defined on a polyhedron   $X$ and $U$ is defined according to \cref{eq:X-U}, then $U^P = U^{++}$.%
%
%
\footnote{\label{footnote:polyhedron}It is worth noting that the ABB theorem provides an alternative proof of this result. Recall that it suffices to argue $U^{++}$ is closed in order to conclude $U^P = U^{++}$. Clearly, the elements of $U^{++}$ come in faces, and a polyhedron has finitely many faces. Since the faces of a polyhedron are closed, and a finite union of closed sets is closed, this implies that $U^{++}$ is closed.}
%
%
\end{proposition}

\section{Second welfare theorem with piecewise-linear concave utility functions} \label{s:second-welfare-thm}  

In the paper, we showed that the notions of exposed faces and normal vectors play crucial roles for our characterization of a Pareto optimal utility profile as a welfare-maximizing point. Recall that the  normal vector also plays an important role in the second theorem of welfare economics in identifying  a price vector  that supports a Pareto optimal allocation as a competitive equilibrium outcome. Unlike in our characterization, the idea of a normal vector   in the second welfare theorem applies to the space of goods, not the space of utility profiles. However, the fact that the two spaces are closely connected hints at the possibility of establishing the second welfare theorem using the machinery we have developed so far.   We do so in the current section under a set of assumptions on the agent preferences and endowments that generalize the existing welfare theorem in a certain direction.        

To begin,  consider an exchange economy with $m$ types of goods with some integer $m>0$. For each $k \in \{1,\dots,m\}$, let $\bar e^k>0$ be the total supply of type-$k$ goods in the environment. Let $\bar e$ denote the vector $(\bar e^k)_{k =1}^m$. Each alternative $x=(x_i)_{i \in I}$, $x_i=(x^k_i)_{k =1}^m \in \mathbb R^m_+$, specifies consumption bundle $x_i$ for each $i \in I$. A profile of consumption bundles $x$ is said to be feasible if and only if $\sum_{i \in I} x_i \le \bar e$. In this context, the choice set $X$ is defined as the set of all feasible profiles of consumption bundles.  Each individual $i \in I$ is endowed with a utility function $u_i:\mathbb R^m_+ \to \mathbb R$. Suppose that each agent $i$ is endowed with a vector of goods $e_i \in  \R^m_+ \backslash \{0\}$  and let $\bar e = \sum_{i \in I} e_i $. A vector $p \in \mathbb R^m$ is referred to as a price profile.  A pair $(p, x)$ of a price profile $p$ and a profile $x=(x_i)_{i \in I}$ of consumption bundles is a \emph{Walrasian equilibrium} if 
\begin{enumerate}
\item $\sum_{i \in I} x_i = \bar e,$ and
\item  $x_i \in \arg\max_{y_i \in B_i(p)} u_i(y_i)$ for  each $i \in I$, where $B_i(p) := \{y_i \in \mathbb R^m_+ \mid \langle p, y_i \rangle \le \langle p, e_i \rangle \}$ is the budget set of $i$.
\end{enumerate}

We consider a case where utility functions of all players are piecewise-linear concave (PLC), as defined in  \cref{def:plc}. PLC utility functions may appear somewhat restrictive, but  any concave function can be approximated arbitrarily closely by a PLC utility function \citep{broiva75}. Meanwhile, we make a weaker assumption  in another dimension---preference monotonicity. The existing second   welfare theorem assumes agents' utility functions to be strictly monotonic. We invoke a weaker form of monotonicity. Say that an allocation  $(x_i)_{i \in I}$ is \emph{strictly feasible for good $k$} if it is feasible and satisfies $\sum_{i\in I} x_{i}^k  <  \bar e^k$.  We   assume that  the agent preferences are  \emph{monotonic under limited resources} in the following sense:  for any allocation $ (x_i)_{i\in I}$ that is strictly feasible for good $k$, there exist an agent $j$ and $\tilde x_j \in \R^m_+$ such that   $u_j (\tilde x_j) >  u_j (x_j)$   while $\tilde x_{j}^{k'} =x_{j}^{k'}, \forall k' \ne k$, $\tilde x_{j}^{k} > x_{j}^{k}$, and $\tilde x_{j}^{k} + \sum_{i\ne j} x_{j}^{k} \le \bar e^k$.  That is,   given any allocation that does not exhaust the endowment of good  $k$, there exists an agent who gets better off by consuming more of that good within its endowment. This condition is fairly weak. For instance, it allows for agents to consider a certain good indifferently, or even as bads (rather than goods), as long as there is at least one agent who likes to consume that good.  We are now ready to prove the second welfare theorem under the above assumptions. 

\begin{proposition}\label{theorem:second-welfare-piecewise-linear}
Consider the exchange economy described above. 
 If $(u_i(e_i))_{i \in I}$ is  Pareto optimal,  then there exists a positive price vector $p \gg 0$  such that $(p, (e_i)_{i \in I})$ is a Walrasian equilibrium. 
\end{proposition}

\begin{proof}[Proof of \cref{theorem:second-welfare-piecewise-linear}]
Let  $A_i := \{x \in \R_+^m \mid u_i (x) \ge u_i (e_i) \}$ for each agent $i$. Observe that each $A_i$ is a polyhedron since it is an intersection of two polyhedra,  $\{x \in \R^m \mid x \ge 0 \}$ and $\{x \in \R^m \mid u_i (x) \ge u_i (e_i) \} = \cap_{k \in K_i}    \{x \in \R^m \mid u_{i,k} (x) \ge u_i (e_i) \}$.

Consider the set $A = \left\{x \in \R^m_+ \mid \exists x_1 \in A_1, x_2 \in A_2, \dots, x_n \in A_n \text{ s.t. } x= \sum_{i \in I} x_i  \right\}$. Observe that $A$ is the image of the set $A_1 \times A_2 \times \cdots \times A_n$ under the affine mapping $\pi$ that maps $(x_i)_{i \in I}$ to $\sum_{i \in I} x_i$. By \cref{lemma:algebra-of-polyhedra}(i) and (ii), $A$ itself is a polyhedron. 

Next, we argue that $\bar e$ is a minimal element of the set $A$. Suppose for contradiction that there exists an element $x \in A$ where $x < \bar e$ where $x^k < \bar e^k$ for some good $k$.   Since $x \in A$, there exists an allocation $(y_i)_{i\in I} $ where $y_i \in A_i$ such that $x = \sum_{i \in I} y_i$.  Since this allocation is strictly feasible for the good $k$, the monotone preference under limited resources implies  that there are some agent $j$ and $\tilde y_j \in \R^m_+$ such that $u_j (y_j) < u_j (\tilde y_j)$ while  $\tilde y_{j}^{k'} =y_{j}^{k'}, \forall k' \ne k$, $\tilde y_{j}^{k} > y_{j}^{k}$, and $\tilde y_{j}^{k} + \sum_{i\ne j} y_{i}^{k} \le \bar e^{k} $.  Now consider an alternative allocation  $(z_i)_{i \in I}$, which is identical to $(y_i)_{i \in I}$ except that $z_j  = \tilde y_j$. Note that this allocation is feasible under the endowment $\bar e$ and that $u_j (z_j) > u_j (y_j) \ge u_j (e_j)$ while $u_i (z_i) = u_i (y_i) \ge u_i (e_i), \forall i \ne j$, which contradicts the Pareto optimality of  $(e_i)_{i \in I}$.  

That $\bar e$ is a minimal element of $A$ implies  that $-\bar e$ is a maximal element of $-A$. By \cref{lemma:downward-closure-suffices}, this implies that $-\bar e$ is a maximal element of $\text{dc}(-A)$. Moreover, by \cref{lemma:algebra-of-polyhedra}(iv) $\text{dc}(-A)$ is a polyhedron and so by \cref{lemma:algebra-of-polyhedra}(iii) all of its faces are exposed. Thus, by \cref{cor:exposed-faces-easy}, there exists a supporting hyperplane of $-A$ through the point $-\bar e$ with a positive normal $\phi$.  The same normal $p := \phi$ can define a supporting hyperplane to $A$ through the point $\bar e$; that is, 
\begin{equation*}
\langle p, y \rangle \ge  \langle p , \bar e \rangle, \forall y \in A,
\end{equation*}
where $p$ is a   positive vector of prices. 

It remains to show that the positive price vector $p$ just constructed supports the allocation $(e_i)_{i \in I}$ as a Walrasian equilibrium.   For this, it suffices to show that each $e_i$ maximizes $u_i (\cdot)$ under the prices $p$ and the  budget $\langle p, e_i \rangle$. To do so, we take any $x_i$ with  $u_i(x_i) > u_i(e_i)$ and show that agent $i$ cannot afford $x_i$. 

By continuity of $u_i$, the inequality $u_i(x_i) > u_i(e_i)$ implies that for some $\lambda < 1$ but sufficiently close to $1$, we have $u_i(\lambda x_i) > u_i(e_i)$, so by definition we have $\lambda x_i \in A_i$. This implies that $\lambda x_i + \sum_{j \neq i} e_j \in A$. Since $\langle p, \lambda x_i + \sum_{j\neq i} e_j \rangle  \ge \langle p, \sum_{i \in I} e_i \rangle$, we must also have $\langle p, \lambda x_i \rangle \ge \langle p, e_i \rangle$. Dividing through by $\lambda$ gives $\langle p, x_i \rangle \ge \langle  \frac{1}{\lambda} p, e_i \rangle > \langle p, e_i \rangle$ where the strict inequality holds since $e_i$ is nonnegative and nonzero while $p$ is   positive.\end{proof}

In addition to the weakening of preference monotonicity, we also dispense with the typical assumption required by the existing second welfare theorem that every consumer has a positive endowment for every type of good (i.e., $e_{i}\gg 0,\forall i \in I$).  The positive endowment assumption can be quite restrictive, excluding many realistic situations. Relaxing the same assumption was an important motivation behind Arrow's generalization of the first welfare theorem.\footnote{``While listening to a talk about housing by Franko (sic) Modigliani, Arrow realized that most people consume nothing of most goods (for example, living in just one particular kind of house), and thus that the prevailing efficiency proofs assumed away all the realistic cases,'' according to Geanakoplos in  \url{https://www.econometricsociety.org/sites/default/files/inmemoriam/arrow_geanakoplos.pdf}.
}  
%
%
At the same time, the theorem assumes PLC utility functions. This assumption guarantees  that the ``upper contour set'' of the target allocation---or the set of goods weakly preferred to  $(e_i)_{i \in I}$---is a polyhedron.  Meanwhile, preference monotonicity and Pareto-optimality of $(u_i(e_i))_{i \in I}$ ensure that the vector  $\bar e$ is a (minimal) face  of this set.    Invoking \cref{theorem:plc-positive-utility},  $\bar e$  is then exposed by a positive normal (or price vector) that supports  $(e_i)_{i\in I}$ as a competitive equilibrium allocation.

\end{document}